\newcommand{\ceil}[1]{\left \lceil #1 \right \rceil}
\title{Optimal In-place Algorithms for Basic Graph Problems}
\titlerunning{Optimal In-place Algorithms for Basic Graph Problems} 
\author[1]{Sankardeep Chakraborty}
\author[2]{Kunihiko Sadakane}
\author[3]{Srinivasa Rao Satti}
\affil[1]{RIKEN Center for Advanced Intelligence Project, Japan\\
  \texttt{sankar.chakraborty@riken.jp}}
 \affil[2]{The University of Tokyo, Japan\\
  \texttt{sada@mist.i.u-tokyo.ac.jp}}
  \affil[3]{Seoul National University, South Korea\\
  \texttt{ssrao@cse.snu.ac.kr}}
\authorrunning{Chakraborty, Sadakane and Satti} 
\subjclass{Dummy classification -- please refer to \url{http://www.acm.org/about/class/ccs98-html}}
\keywords{in-place algorithm, space-efficient algorithms, linear time graph algorithm}
\begin{document}

\maketitle

\begin{abstract}
We present linear time {\it in-place} algorithms for several basic and fundamental graph problems including the well-known graph search methods (like depth-first search, breadth-first search, maximum cardinality search),  connectivity problems (like biconnectivity, $2$-edge connectivity), decomposition problem (like chain decomposition) among various others, improving the running time (by polynomial multiplicative factor) of the recent results of Chakraborty et al. [ESA, 2018] who designed $O(n^3 \lg n)$ time in-place algorithms for a strict subset of the above mentioned problems. The running times of all our algorithms are essentially optimal as they run in linear time. One of the main ideas behind obtaining these algorithms is the detection and careful exploitation of sortedness present in the input representation for any graph without loss of generality. This observation alone is powerful enough to design some basic linear time in-place algorithms, but more non-trivial graph problems require extra techniques which, we believe, may find other applications while designing in-place algorithms for different graph problems in future.   
\end{abstract}


\section{Introduction}
Inspired by the speedy growth of humongous data set (``big data phenomenon"), {\it space efficient algorithms} are becoming increasingly more crucial than ever before. The dire need of such algorithms is also propelled by the pervasive usage of small specialized handheld devices and embedded systems which come equipped with tiny memory. To design such algorithms a vast array of computational models has already been proposed in the literature. In what follows, we briefly mention a few of them in the order they are historically developed.
\newline
In the {\it read-only memory} model (henceforth {\sf ROM}) where the input is read-only, output is write only, and a limited sized random access read/write work space is available, researchers have designed space efficient algorithms for selection and sorting~\cite{ChanMR13,ElmasryJKS14,Frederickson87,MunroP80,PagterR98}, problems in computational geometry~\cite{AsanoBBKMRS14,AsanoMRW11,BarbaKLSS15,ChanC07,DarwishE14}, and graphs~\cite{AsanoIKKOOSTU14,BanerjeeC0S18,CRS17,ElmasryHK15}  among various others.  In the {\it in-place} model, it is assumed that the input elements are given in an array, and the algorithm may use the input array as working space,
hence the algorithm is allowed to modify the array during its execution. However, after the execution all the input elements should be present in the array 
and the output maybe put in the same array or sent to an output stream. The extra space usage during the entire execution of the algorithm is limited to $O(\lg n)$ bits only. A prime example of an in-place algorithm is the classic heap-sort. Other than in-place sorting \cite{FranceschiniMP07}, searching~\cite{FranceschiniM06,Munro86} and selection \cite{LaiW88}, many in-place algorithms were designed in areas such as computational geometry~\cite{BronnimannCC04} and string algorithms~\cite{FranceschiniM07}. A very recent addition to this long list is the in-place algorithms for the graph problems~\cite{Chakraborty00S18}. Other than these, researchers have also designed space efficient algorithms in {\it (semi)-streaming} models~\cite{AlonMS99,FeigenbaumKMSZ05,MunroP80} and recently introduced {\it restore}~\cite{ChanMR18} and {\it catalytic-space}~\cite{BuhrmanCKLS14} models.


\subsection{Previous Work on Space Efficient Graph Algorithms}
In this paper we exclusively deal with space efficient algorithms for graph theoretic problems. Recent study of space efficient graph algorithms in the streaming and semi-streaming models focused on fundamental problems like connectivity, minimum spanning tree, matching etc. See~\cite{McGregor14} for a detailed survey concerning these results. However study of such algorithms in {\sf ROM} dates back to almost 40 years. In fact there already exists a very rich history of extremely space efficient graph algorithms in {\sf ROM}, and this is captured by the complexity class {\sf L} which contains decision problems that can be solved by a deterministic Turing machine using only
logarithmic amount of work space. A plethora of important algorithmic graph problems actually admit an {\sf L} algorithm~\cite{DattaLNTW09,ElberfeldJT10,ElberfeldK14,ElberfeldS16}, and the most famous among these is the Reingold's~\cite{Reingold08} method for checking {\it st}-reachability in undirected graphs. In spite of being optimally space efficient, a major drawback of these algorithms is that their running time is most often some polynomial of very high degree, and this fact is not surprising, given the results of~\cite{EdmondsPA99,Tompa82} (which state that for problems like directed {\it st}-reachability, if the number of bits available is $o(n)$, then some of the natural algorithmic approaches would require super-polynomial time). This fact alone hinders the practicality of these algorithms in applications. Motivated by this problem, and inspired by the pervasive practical applications of the fundamental graph algorithms, recently there has been a surge of interest in improving the space complexity of graph algorithms without paying too much penalty in the running time i.e., reducing the working space of the classical graph algorithms to $O(n)$ bits with little or no penalty in running time. Thus the goal is to design space-efficient yet reasonably time-efficient graph algorithms on the {\sf ROM}. Generally most of the standard implementations of classical graph algorithms take linear or near-linear running time and use $O(n \lg n)$ (or sometimes $O(m \lg n)$ for graphs with $n$ vertices and $m$ edges) bits. A recent series of papers~\cite{AsanoIKKOOSTU14,BanerjeeC0S18,CRS17,Chakraborty2018,ElmasryHK15} with this point of view showed such results for a vast array of basic graph problems, namely depth-first search (henceforth {\sf DFS}), breadth-first search (henceforth {\sf BFS}), minimum spanning tree (henceforth {\sf MST}), (strong) connectivity, topological sorting, recognizing chordal graphs, bi-connectivity, {\it st}-numbering, shortest path and many others. 

Even if these results are still both time and space efficient, it seems to still require $\Theta(n)$ bits for most of important graph algorithms, and this is a major concern in places with severe space constraints. In order to break this inherent space bound barrier and still obtain reasonable time efficiency, Chakraborty et al.~\cite{Chakraborty00S18} initiated a systematic study of designing efficient {\it in-place} (i.e., using $O(\lg n)$ bits of extra space other than the input space) algorithms for graph problems by defining a new framework which is a slight relaxation of the {\sf ROM}. Using this framework they were also able to show in-place {\sf DFS, BFS, MST}, reachability algorithms taking time $O(n^3 \lg n)$. Despite being optimal in space usage, observe that these results still leave a polynomial gap in the running time from the optimal value. In this work, we essentially obtain the best of the both worlds by closing this gap. More specifically, we show how one can design optimal in-place algorithms i.e., $O(m+n)$ time and using $O(\lg n)$ bits of extra space, for several of these (and a lot more) basic graph algorithms in this work. Recently Kammer et al.~\cite{Kammer} also considered a similar model where they showed efficient in-place algorithms for {\sf DFS}, unordered-{\sf BFS} (will be defined shortly) only.

\subsection{In-place Model for Graph Algorithms and Input Representations}

Before explaining our in-place algorithms and stating main results, in this section we first describe the input graph representation. Note that, as in the case of the standard in-place model, we need to ensure that the graph (adjacency) structure must remain intact throughout entire execution of the algorithm. Let $G=(V,E)$ be the input graph with $n=|V|$, $m=|E|$, and as usual let $V=\{1,2,\cdots,n\}$ denote the vertex set of $G$. We assume that the input graph is given in the standard adjacency array format, and throughout this paper, we refer to this array as $Z$. More specifically, it is an array having size $(n+m+1)$ ($(n+2m+1)$ resp.) words for directed (undirected resp.) graphs where $Z[1]$ stores the number of vertices in $G$, the next $n$ entries (which we refer to as the {\it offsets} part of $Z$) store $n$ pointers (one per vertex) pointing to the location in $Z$ of the last neighbor for each vertex, and finally the last $m$ ($2m$ for undirected graphs) entries are reserved for the edges of $G$. At this point, we should emphasize a small, yet important, technical detail. 
The $Z$ array can be thought of as a single bit array as follows.
For a directed graph $G$, the array $Z$ is a concatenation of $Z[1]$ of length $\ceil{\lg n}$ bits, $Z[2] \ldots Z[n+1]$ of length $\ceil{\lg m}$ bits each\footnote{Note that it is enough to store the offset values starting from $0$, since we can add $n+1$ to the offset value to find the corresponding location in $Z$; hence the offset values can be stored using $\ceil{\lg m}$ bits.}, and finally $Z[n+2] \ldots Z[n+m+1]$ of length $\ceil{\lg n}$ bits each. For undirected graphs, only the second part changes to size $\ceil{\lg m}+1$ bits (instead of $\ceil{\lg m}$) each. Thus, if we just remember the boundaries, we know exactly how many bits we need to read in order to extract useful information from the relevant parts of $Z$. For the sake of simplicity, we drop the ceiling notations from now on. Moreover, throughout this paper, it should be clear from the context the word size depending on which part of $Z$ we are currently working on. See Figure~\ref{figure:graph1} 
for an example. Note that this representation implicitly captures the degree information for every vertex in $G$. Given this format, we say an algorithm $\mathcal{A}$ is an {\it in-place} algorithm if $\mathcal{A}$ (a) may modify any part of $Z$ during its execution, (b) retains all the initial elements of $Z$ (in any order) when it finishes execution; and (c) uses just $O(\lg n)$ bits of extra space. Our goal is to design such algorithms in this paper for a vast array of fundamental graph problems.


In this paper we assume the standard word {\sf RAM} model of computation.
We count space in terms of number of {\it extra} bits used by the algorithm other than the input, and this quantity is referred as ``extra space'' and ``space'' interchangeably throughout the paper. 

\begin{figure}[h]
\begin{center}
	\includegraphics[scale=0.38]{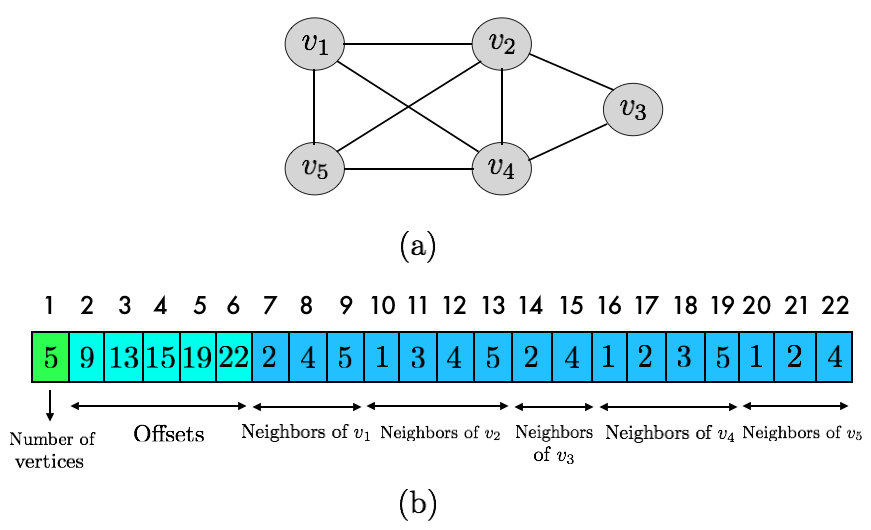}
\end{center}
\caption{(a) An undirected graph $G$ with $5$ vertices and $8$ edges. (b) The standard adjacency array representation of $G$. To avoid cluttering the diagram, we drop the superscript $v$ from the vertex labels while referring to them as neighbors.}
\label{figure:graph1}
\end{figure}

\subsection{Graph Terminology and Notations}
In general we will assume the knowledge of basic graph theoretic terminology as given in~\cite{diestel} and basic graph algorithms as given in~\cite{CLRS}. Still here we collect all the necessary graph theoretic definitions that will be used throughout the paper for quick reference and making the paper self-contained. For {\sf BFS} traversal that we study here, there are two versions studied in the literature. In the {\it ordered {\sf BFS}} (sometimes also known as queue {\sf BFS}~\cite{Chakraborty2018}), vertices are extracted from the queue in the first in first out ({\sf FIFO}) order whereas in the {\it unordered {\sf BFS}}~\cite{BanerjeeC0S18}, vertices can be taken out from the queue in any order as long as no elements are extracted from a higher level of the {\sf BFS} tree before finishing all the vertices from a lower level of the tree. In this paper, by a {\sf BFS/DFS} traversal of the input graph $G$, as in~\cite{AsanoIKKOOSTU14,BanerjeeC0S18,CRS17,Chakraborty2018} we refer to reporting the vertices of $G$ in the {\sf BFS/DFS} ordering, i.e., in the order in which the vertices are visited for the first time. Tarjan et al.~\cite{TarjanY84} defined another method called maximum cardinality search ({\sf MCS}) and used this to give a recognition algorithms for chordal graphs. {\sf MCS} works as follows: assuming that every vertex is unnumbered at the beginning, at each iteration of the execution of {\sf MCS}, an unnumbered vertex that is adjacent to the most number of numbered vertices is chosen (breaking the ties arbitrarily), and is numbered with the next available label. Thus, the output of the {\sf MCS} algorithm is a numbering of the vertices from $1$ to $n$. 

A cut vertex in an undirected graph $G$ is a vertex $v$ that when removed (along with its incident edges) from a graph creates more components than previously in the graph. A (connected) graph with at least three vertices is biconnected if and only if it has no cut vertex. 
Similarly in an undirected graph $G$, a bridge is an edge that when removed (without removing the vertices) from a graph creates more components than previously in the graph. A (connected) graph with at least two vertices is $2$-edge-connected if and only if it has no bridge. 
Given a biconnected graph $G$, and two distinguished vertices $s$ and $t$ in $V$ such that $s \neq t$, $st$-numbering is a numbering of the vertices of the graph so that $s$ gets the smallest number, $t$ gets the largest and every other vertex is adjacent both to a lower-numbered and to a higher-numbered vertex i.e., a numbering $s=v_1,v_2,\cdots,v_n=t$ of the vertices of $G$ is called an $st$-numbering, if for all vertices $v_j, 1<j<n$, there exist $1\leq i<j<k\leq n$ such that $\{v_i, v_j\},\{v_j, v_k\} \in E$. It is well-known that $G$ is biconnected if and only if, for every edge $\{s,t\}\in E$, it has an $st$-numbering. A topological sort of a directed acyclic graph ({\sf DAG}) gives a linear ordering of its vertices such that for every directed edge $(u,v) \in E$ from vertex $u$ to vertex $v$, $u$ comes before $v$ in the ordering. A minimum spanning tree ({\sf MST}) is a subset of the edges of a connected, edge-weighted undirected graph that connects all the vertices together, without any cycles and with the minimum possible total edge weight. That is, it is a spanning tree whose sum of edge weights is as small as possible.


\subsection{Our Main Results and Organization of the Paper}
In Section~\ref{sec:linear-free-bits} we start by designing a linear time in-place procedure to obtain linear bits of additional free space inside the offsets part of the adjacency array. Using this, we can already show an improved set of algorithms for (a strict superset of) problems that Chakraborty et al.~\cite{Chakraborty2018} considered (for example, {\sf DFS}, unordered {\sf BFS} and {\sf MST}), but this algorithms are still not optimal as they are at least polylog multiplicative factor away from linear running time. Towards obtaining optimal linear time in-place algorithms, we first provide an improved linear time in-place routine to obtain almost $n\lg n$ additional free bits of space inside the offsets part, which is what we use crucially along with other additional ideas to show the following main result of this paper in Section~\ref{sec:nlgn-free-bits}. 

\begin{theorem}\label{linear-in-place-results}
Using linear time in the in-place model, one can 
\begin{enumerate}
\item traverse the vertices of any graph in (un)ordered {\sf BFS} and {\sf DFS} manner,
\item recognize bipartite graphs and compute connected components in undirected graphs,
\item report the vertices of a directed acyclic graph ({\sf DAG}) in topologically sorted order,
\item obtain a maximum cardinality search ordering of any graph,
\item output an {\it st}-numbering of given biconnected graph, given two vertices $s$ and $t$, 
\item perform a chain decomposition of any undirected graph, and
\item determine whether any given undirected graph $G$ is biconnected (and/or $2$-edge connected resp.) and if not, we can also compute and report all the cut vertices (bridges resp.) of $G$.
\end{enumerate}
Also, given an undirected edge-weighted (where weights are bounded by some polynomial in $n$) graph $G$, we can find a minimum spanning tree ({\sf MST}) of $G$ in $O(m \lg n)$ time in-place. 
\end{theorem}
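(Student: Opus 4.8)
The plan is to reduce every item of the statement to running a space-efficient \emph{linear}-time algorithm whose entire read/write workspace is hosted in the (almost) $n\lg n$ free bits produced by the routine of Section~\ref{sec:nlgn-free-bits}. The point is that this is exactly one $\lg n$-bit word per vertex while offset queries on the compressed block are still answered in $O(1)$ time, and this is precisely the amount of workspace that the known space-efficient linear-time graph algorithms in the {\sf ROM} line of work~\cite{AsanoIKKOOSTU14,BanerjeeC0S18,CRS17,ElmasryHK15,Chakraborty2018} consume (most of them only $O(n)$ bits; a few, such as an explicit {\sf FIFO} queue, genuinely $\Theta(n\lg n)$). So the skeleton of the proof is uniform: first run the compression of Section~\ref{sec:nlgn-free-bits}; then run the relevant algorithm using the freed region plus the $O(\lg n)$ registers allowed, streaming out the answer as it is produced; finally undo the compression so that $Z$ again holds its original multiset of entries. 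What varies between items is only which algorithm we drop into the freed region.

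For the traversal-based items this is almost immediate. I would host the $O(n)$-bit linear-time {\sf DFS} and unordered-{\sf BFS} procedures of~\cite{ElmasryHK15,BanerjeeC0S18} in the freed region; bipartiteness testing and connected components then come out of a single such pass (two-colouring, respectively writing a component label into each vertex's word and flushing it to the output); topological sort is the reverse finishing order of a {\sf DFS} on the {\sf DAG}, emitted to the stream and reversed by the usual device (or, equivalently, by a Kahn-style scheme keeping residual in-degrees in the vertex words); and the low-point computations --- biconnectivity, $2$-edge-connectivity, reporting all cut vertices or all bridges, $st$-numbering, and chain decomposition --- are obtained by hosting the corresponding $O(n)$-bit {\sf ROM} algorithms of~\cite{CRS17,Chakraborty2018} in the freed region, reporting each cut vertex, bridge, chain or number on the output stream before decompressing $Z$.

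Two items need a purpose-built implementation. For \emph{ordered} {\sf BFS} the queue can hold $\Theta(n)$ vertices at once, so $O(n)$ bits do not suffice and the near-$n\lg n$ freed bits are used in an essential way: I would keep the current {\sf BFS} level as an intrusive singly linked list threaded through the vertices' freed words, build the next level as a second such list, and, when a vertex leaves the current level and its adjacency list is scanned, overwrite its word with its (now final) {\sf BFS} number --- the word value doubling as the ``already discovered'' flag, so no separate $n$-bit marker is needed and everything fits. For {\sf MCS} I would bucket the unnumbered vertices by their current number of numbered neighbours, the buckets being doubly linked lists threaded through the freed words (spilling, when vertex degrees force it, into endpoints of already-scanned edge slots that are restored during the final pass); since these counts only increase, each of the $O(m)$ neighbour updates moves one vertex one bucket upward in $O(1)$ time, for $O(m+n)$ total, and the next vertex to number is taken from the highest nonempty bucket tracked by a single register.

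Finally, for {\sf MST} we do not aim for linear time. As the weights are polynomially bounded they fit in $O(\lg n)$ bits, so I would run Bor\r{u}vka's algorithm: keep a component-representative entry per vertex in the freed region, and in each of the $O(\lg n)$ rounds scan $Z$ once in $O(m+n)$ time to find, for every current component, its minimum-weight outgoing edge, then union components by pointer jumping on the representative array (with no physical contraction of $Z$) and emit the chosen edges to the output; this gives $O(m\lg n)$ time, after which $Z$ is decompressed. The main obstacle will not be the graph theory, which is classical in every case, but the space bookkeeping: squeezing each auxiliary structure into the roughly $n\lg n$ freed bits --- for ordered {\sf BFS} and {\sf MCS} this is essentially tight and is what forces the intrusive-list encodings above --- and ensuring that, whatever edge slots were borrowed as scratch, the decompression step leaves $Z$ holding exactly its original multiset of entries while only $O(\lg n)$ bits of state ever persist.
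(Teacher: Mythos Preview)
Your skeleton is right, and for unordered {\sf BFS}, bipartiteness, connected components, and {\sf MST} it essentially matches the paper (the paper runs Prim with a binomial heap rather than Bor\r{u}vka, but either yields $O(m\lg n)$). Your ordered-{\sf BFS} and {\sf MCS} sketches differ from the paper's --- the paper handles ordered {\sf BFS} by observing that at most one level can exceed $n/2$ vertices and regenerating that level on the fly from its predecessor, and handles {\sf MCS} by running four phases on $n/4$ vertices each --- but your intrusive-list encodings are plausible alternatives, modulo the ``spilling into edge slots'' for {\sf MCS}, which would need real justification since the doubly-linked buckets plus bucket-index array already want $\sim 3n\lg n$ bits.

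The genuine gap is items~5--7 (and, to a lesser extent, {\sf DFS}). You propose to ``host the corresponding $O(n)$-bit {\sf ROM} algorithms of~\cite{CRS17,Chakraborty2018}'' for chain decomposition, biconnectivity, $2$-edge connectivity and $st$-numbering. But those algorithms are \emph{not} linear: as the paper's own Theorem~\ref{big-oh-rom-results} records, they run in $O(m\lg n\lg\lg n)$ up to $O(m\lg^{2}n\lg\lg n)$ time. Dropping them into the freed region yields exactly Theorem~\ref{big-oh-restore-results}, not Theorem~\ref{linear-in-place-results}. Nor does having $n\lg n-2n$ free bits let you run the classical Tarjan/Schmidt algorithms directly: those need discovery times, low values (or a visited array) \emph{and} parent pointers simultaneously, i.e.\ well over $n\lg n$ bits. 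Likewise the $O(n)$-bit {\sf DFS} of~\cite{ElmasryHK15} is $O(m\lg\lg n)$, not linear; the paper instead builds its own linear-time in-place {\sf DFS} by capping the explicit stack at $n/2$ entries and reconstructing it when it empties.

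What the paper actually does for items~6--7 is the heart of the argument, and you have not reproduced it. After one linear-time {\sf DFS} it \emph{rearranges the edge portion of $Z$} so that in every vertex's neighbour list the first entry is its {\sf DFS} parent, followed by all its {\sf DFS} children, followed by the back-edge endpoints (Lemmas~\ref{setting-up-parent}--\ref{represent-implicit-tree}); in parallel it builds a $2n$-bit degree sequence (Corollary~\ref{degee-sequence-corollary}) recording how many children each vertex has. This ``implicit search tree'' makes parent, child-enumeration, and back-edge enumeration all $O(1)$-per-step without storing any $n\lg n$-bit arrays, and Schmidt's chain decomposition then runs in a single linear pass over this representation, from which cut vertices, bridges, biconnectivity and $2$-edge connectivity fall out. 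Any correct proof of items~6--7 needs either this device or something of comparable strength; simply citing the {\sf ROM} literature does not close the gap.
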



\subsection{Techniques}
All the results of our paper stem from the following very simple yet absolutely crucial observation: {\it numbers in sorted order have less entropy than in any arbitrary order}. More specifically, assuming we have $n$ numbers from a universe of size $m$, when these numbers are in any arbitrary order their binary entropy is $n \lg m$ but when they are in sorted order, binary entropy becomes $n \lg m- \Theta(n \lg n)$. This clearly indicates that we can exploit the sorted structure assumption to gain some additional space. Now, note that, without loss of any generality, by construction, the {\it offsets} part of the adjacency array $Z$ for any given graph $G$ is sorted. Thus, we can use the above mentioned idea in the offsets part of $Z$ to gain some free space which is what we use finally to design our optimal in-place graph algorithms. Towards this, we also have to handle several other key technical issues which we describe in respective sections in detail.

\section{Saving Linear Bits and its Applications}\label{sec:linear-free-bits}
As a warm up, in this section we start by showing how we can squeeze in linear sized free bits inside the offsets part of $Z$ while still being able to access any element inside the offsets part in $O(1)$ time, as well as returning to the original configuration of the offsets part of $Z$ before freeing linear bits. Towards this, we first reprove the following lemma, which is essentially same as~\cite[Lemma 5]{Kammer}. See Appendix~\ref{app:lemma2} for a proof.
\begin{lemma}\label{lem:saving-oh-n-bits}
Given a sorted list of $n$ integers from the universe $[0, m-1]$, it can be represented either simply as an array $A[1 . . . n]$ with
the integers in sorted order or as an array of $n$ integers, such that for some fixed constant $c > 1$, the last $cn$ bits of this array are all zero. Moreover, there exists an in-place $O(n)$ time algorithm for switching between both these formats.
\end{lemma}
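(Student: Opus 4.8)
The plan is to exploit the low entropy of sorted sequences promised in the techniques section, but at the granularity of \emph{constant-size blocks}, which is exactly what keeps both the format conversion and random access in-place and constant-time. Write $w=\ceil{\lg m}$ for the bit-length of each integer, so that in the first format the list occupies $nw$ bits, with $A[i]$ stored in bits $[(i-1)w,\,iw)$. Fix a large enough constant $k$ (we will see $k=64$ works for every universe $m\ge 3$) and cut the list into $\ceil{n/k}$ consecutive blocks of $k$ integers. The $k$ values in a block are nondecreasing, so the block is one of exactly $\binom{m+k-1}{k}$ nondecreasing $k$-tuples over $[0,m-1]$ (stars and bars); we store the block by its rank in $\{0,\dots,\binom{m+k-1}{k}-1\}$, which occupies $b:=\ceil{\lg\binom{m+k-1}{k}}$ bits. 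A routine estimate of $\binom{m+k-1}{k}$ shows that $b$ is smaller than the $kw$ bits the block originally occupied by $\Omega(k)$ bits — in the regime $m>k$ by $\Theta(k\lg k)$ bits — so that over the $\Theta(n/k)$ blocks at least $cn$ bits are saved for some fixed $c>1$ once $k$ is chosen large enough. The second format is then obtained by packing the $\ceil{n/k}$ block codes consecutively into the leading $\le nw-cn$ bits of the array and zeroing the trailing $\ge cn$ bits. This format even supports $O(1)$-time access: a rank is an $O(k\lg m)$-bit, hence $O(1)$-word, number, and ranking/unranking a nondecreasing $k$-tuple takes $O(k)=O(1)$ arithmetic operations via the combinatorial number system, so $A[i]$ is recovered by decoding block $\lfloor (i-1)/k\rfloor$ and returning its $((i-1)\bmod k)$-th entry.

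Both conversions are single linear scans using $O(1)$ words of workspace. To go from the first format to the second, sweep the blocks left to right: read block $j$'s $kw$ bits into registers, compute its rank, and write the $b$-bit code into bits $[(j-1)b,\,jb)$. Blocks $1,\dots,j-1$ are already processed and block $j$ now lives in registers, so every bit in $[0,jkw)$ is safe to overwrite, and $b\le kw$ gives $jb\le jkw$; hence the write never disturbs an unprocessed block. Finish by zeroing the trailing bits. To go back, run the mirror-image scan right to left: for $j=\ceil{n/k}$ down to $1$, read block $j$'s $b$-bit code, unrank it, and write the reconstructed $kw$ bits into $[(j-1)kw,\,jkw)$; the yet-unread codes of blocks $1,\dots,j-1$ sit in $[0,(j-1)b)\subseteq[0,(j-1)kw)$, disjoint from the region just written because again $b\le kw$. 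Each scan is $O(n)$ time, and since the backward scan overwrites all $nw$ bits it restores the sorted array exactly.

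The step I expect to need the most care is this in-place ordering argument — certifying that neither scan overwrites data it still needs — though as just sketched it collapses to the single inequality $b\le kw$, i.e. a block's combinatorial code is never longer than its plain fixed-width encoding. What remains is bookkeeping: fixing a concrete $k$ for which, after the ceiling losses, at least $cn$ bits are freed with $c>1$ for every relevant universe size; treating the last block when $k$ does not divide $n$ (encode the leftover $n\bmod k<k$ integers the same way, or simply leave them uncompressed — an $O(\lg n)$-bit loss swallowed by the slack between the bits actually freed and the $cn$ claimed); and disposing of the degenerate cases $m=2$ (the array has only $n$ bits, so the statement is vacuous) and $n=O(1)$ (a constant-size instance) directly.
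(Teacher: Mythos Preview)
Your proof is correct but takes a genuinely different route from the paper's. The paper does not block the array at all; instead it observes that the most significant $c$ bits of the $n$ sorted values themselves form a nondecreasing sequence over $\{0,\dots,2^c-1\}$, and such a sequence is fully determined by $2^c-1$ breakpoint positions into $A$. Storing these $O(1)$ positions (first with $c=1$ to bootstrap $n$ free bits, then with the target $c$, placing the remaining positions into those freed bits) and compacting the remaining $\lg m-c$ low-order bits of each element into the prefix of the array frees exactly $cn$ bits; $O(1)$ access reads the low bits directly and recovers the high $c$ bits by counting how many of the $2^c-1$ stored positions precede $i$.

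Your block-rank encoding via the combinatorial number system is more information-theoretically tight per block (you recover $\Theta(n\lg k)$ bits rather than a fixed $cn$), and your left-to-right/right-to-left scan argument for in-placeness is clean, collapsing to the single inequality $b\le kw$. The price is the ranking/unranking machinery, which --- while indeed $O(1)$ time for constant $k$ on the word RAM --- is heavier than the paper's near-trivial bit slicing; you also inherit a mild dependence between the needed $k$ and the universe size that the paper's scheme sidesteps. Conversely, the paper's high-bit/low-bit split makes the subsequent strengthening to $n\lg n-2n$ free bits (their Theorem on saving $n\lg n-2n$ bits) the natural next step, since that theorem just pushes the same idea to $\lg n$ high bits via Elias--Fano; your blocking scheme would need a separate argument there.
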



The above lemma alone is powerful enough to help us design in-place algorithms (albeit with sub-optimal time complexity as we will see shortly) for a variety of fundamental graph algorithms, thanks to the following theorems from~\cite{BanerjeeC0S18,CRS17,Chakraborty2018}.

\begin{theorem}\label{big-oh-rom-results}
In {\sf ROM}, using $cn$ bits of space for some constant $c$, we can 
\begin{enumerate}
\item traverse the vertices of any graph in unordered (ordered resp.) {\sf BFS} manner in $O(m+n)$ ($O(m \lg^2 n)$ resp.) time,
\item recognize bipartite graphs and compute connected components in $O(m+n)$ time, 
\item traverse the vertices of any graph in {\sf DFS} order in $O(m \lg \lg n)$ time,
\item perform a topological sort of a {\sf DAG} in $O(m \lg \lg n)$ time,
\item obtain a maximum cardinality search ordering of any graph in $O(m^2/n+m \lg n)$ time.
\item perform a chain decomposition of any undirected graph in $O(m \lg^2 n \lg \lg n)$ time, 
\item determine whether any given undirected graph $G$ is biconnected (and/or $2$-edge connected resp.) in $O(m \lg n \lg \lg n)$ time, and if not, in the same amount of time and space, we can compute and report all the cut vertices (bridges resp.) of $G$, 
\item output an {\it st}-numbering of given biconnected graph in $O(m \lg^2 n \lg \lg n)$ time, given two distinct vertices $s$ and $t$, and
\item compute a {\sf MST} of a given edge-weighted undirected graph in $O(m \lg n)$ time.
\end{enumerate}
\end{theorem}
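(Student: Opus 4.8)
Theorem~\ref{big-oh-rom-results} is essentially a digest of results that already appear, in one form or another, in~\cite{BanerjeeC0S18,CRS17,Chakraborty2018}, so the proof amounts to recalling those constructions; below I describe the common engine and the extra ingredient each item needs. The unifying observation is that $\Theta(n)$ bits suffice to keep a colour array assigning $O(1)$ bits per vertex, and that such an array can be dressed up as a choice-dictionary / rank-select-like structure so that ``pick an uncoloured vertex'', ``enumerate the vertices of a given colour'', and similar queries cost only $O(1)$ (or, in the worst case, $O(\lg\lg n)$) time. For unordered {\sf BFS}, bipartiteness testing and connected components this is already enough: maintain white/grey/black marks, and to move from level $i$ to level $i+1$ scan the current grey vertices and their adjacency lists once; every edge is then examined $O(1)$ times, for a total of $O(m+n)$, and bipartiteness and the component labelling are read off the same traversal using one extra bit (resp.\ one extra colour) per vertex.

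For ordered ({\sf FIFO}) {\sf BFS} and for {\sf DFS} the obstruction is the auxiliary queue/stack, which written down naively costs $\Theta(n\lg n)$ bits. The plan for {\sf DFS} is to replace the explicit stack by a space-efficient emulation that stores the current root-to-frontier path implicitly (a parent mark per vertex plus a succinct encoding of the ``current neighbour'' pointers), so that a push/pop costs only a small factor more than constant time; this gives $O(m\lg\lg n)$, and topological sort of a {\sf DAG} is then a by-product, emitting each vertex at the moment it is finished. For ordered {\sf BFS} one cannot afford any explicit queue, so instead we process the {\sf BFS} forest level by level and, inside each level, recover the {\sf FIFO} order by a bounded number of scans and stable partitions of the current frontier; this costs an extra $O(\lg^2 n)$ factor, yielding $O(m\lg^2 n)$.

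The remaining items --- {\sf MCS}, chain decomposition, biconnectivity and $2$-edge connectivity (with reporting of cut vertices/bridges), and $st$-numbering --- are all built on top of the space-efficient {\sf DFS}. The plan is to take the textbook algorithm, identify the $\Theta(\lg n)$-bit-per-vertex quantities it keeps (the $\mathit{low}$ values, discovery times, the {\sf MCS} cardinality counters, and so on), and either recompute each such value on demand from the current search state or bucket the vertices by its value so that only $O(n)$ bits are in memory at any time --- at the price of repeating parts of the search, which is exactly the source of the extra $\lg\lg n$, $\lg n$, and $\lg^2 n$ factors in items~5--8 (for instance $O(m^2/n + m\lg n)$ for {\sf MCS}, obtained by processing the cardinality updates in blocks of $\Theta(n/\lg n)$ vertices). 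Finally, {\sf MST} is Prim's algorithm run with a space-efficient priority queue over the $n$ vertices supporting decrease-key; since the weights are polynomially bounded, this fits in $O(n)$ bits with $O(\lg n)$ time per operation, for a total of $O(m\lg n)$. I expect the main obstacle to be exactly the {\sf DFS}-stack emulation together with the ``recompute instead of store'' transformation for the connectivity and $st$-numbering routines: the classical correctness proofs freely assume random access to the per-vertex arrays, so one has to re-argue correctness while those arrays are only partially materialised, and simultaneously bound the number of recomputation passes tightly enough to hit the stated polylogarithmic overheads.
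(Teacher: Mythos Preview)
The paper does not prove this theorem; it is stated purely as a quotation of known results from \cite{BanerjeeC0S18,CRS17,Chakraborty2018} and used as a black box. You correctly identify this in your first sentence, so in that sense your ``approach'' and the paper's coincide: neither supplies an argument here.

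Your additional sketches of the underlying techniques go beyond what the paper does and are broadly in the right spirit (colour arrays plus choice-dictionary access for {\sf BFS}; stack emulation with recomputation for {\sf DFS}; layering the connectivity/$st$-numbering routines on top of the space-efficient {\sf DFS}). Since the paper offers nothing to compare against, I only flag one place where your sketch likely diverges from the cited constructions: for the {\sf MST} item, $cn$ bits of workspace cannot hold a priority queue storing $n$ key values of $\Theta(\lg n)$ bits each, so ``Prim's with a space-efficient priority queue that fits in $O(n)$ bits'' is not a viable summary; the $O(n)$-bit {\sf MST} algorithms in that line of work proceed differently (e.g.\ Bor\r{u}vka-style phases or storing only a sublinear portion of the heap at a time). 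This does not affect the present paper, which never unpacks Theorem~\ref{big-oh-rom-results}.
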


Now armed with Theorem~\ref{big-oh-rom-results}, we can easily obtain in-place algorithms for the above mentioned problems almost in a black box manner, albeit with sub-optimal running time (as for all these classical problems linear time algorithms are known~\cite{CLRS,Schmidt13}) except for the unordered {\sf BFS} problem and its applications (as mentioned in item $1$ and $2$) as they already admit linear time. More specifically, given the adjacency array representation (like in the $Z$ array) of the input graph $G$, we first apply Lemma~\ref{lem:saving-oh-n-bits} in the offsets part of $Z$ so that linear bits become free, and this is what we use to store/access the data structures required in Theorem~\ref{big-oh-rom-results}, giving us the same running time bounds for these problems as stated in Theorem~\ref{big-oh-rom-results}. Finally when the execution of these algorithms is finished, we again use Lemma~\ref{lem:saving-oh-n-bits} to restore the original configuration of the offsets part of $Z$, and thereby restoring the $Z$ array completely to its original state (note that the other parts of $Z$ are completely untouched as both Lemma~\ref{lem:saving-oh-n-bits} and Theorem~\ref{big-oh-rom-results} only worked on offsets part of $Z$). Also observe that we use only some constant number of variables (hence $O(\lg n)$ extra bits other than the input) throughout the entire execution of these algorithms. As a consequence, we obtain in-place algorithms for these problems. We summarize our discussion in the theorem below.

\begin{theorem}\label{big-oh-restore-results}
In the in-place model, one can 
\begin{enumerate}
\item traverse the vertices of any graph in unordered (ordered resp.) {\sf BFS} manner in $O(m+n)$ ($O(m \lg^2 n)$ resp.) time,
\item recognize bipartite graphs and compute connected components in $O(m+n)$ time, 
\item traverse the vertices of any graph in {\sf DFS} order in $O(m \lg \lg n)$ time,
\item perform a topological sort of a {\sf DAG} in $O(m \lg \lg n)$ time,
\item obtain a maximum cardinality search ordering of any graph in $O(m^2/n+m \lg n)$ time,
\item perform a chain decomposition of any undirected graph in $O(m \lg^2 n \lg \lg n)$ time,
\item determine whether any given undirected graph $G$ is biconnected (and/or $2$-edge connected resp.) in $O(m \lg n \lg \lg n)$ time, and if not, in the same amount of time and space, we can compute and report all the cut vertices (bridges resp.) of $G$, 
\item output an {\it st}-numbering of given biconnected graph in $O(m \lg^2 n \lg \lg n)$ time, given two distinct vertices $s$ and $t$, and
\item compute a {\sf MST} of a given edge-weighted undirected graph in $O(m \lg n)$ time.
\end{enumerate}
\end{theorem}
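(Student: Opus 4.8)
The plan is to derive Theorem~\ref{big-oh-restore-results} as an essentially black-box consequence of Lemma~\ref{lem:saving-oh-n-bits} together with the {\sf ROM} algorithms of Theorem~\ref{big-oh-rom-results}. The first observation is that, by construction, the \emph{offsets} part of $Z$ is a sorted list of $n$ integers from the universe $[0,m-1]$: entry $i$ points to the location (in $Z$, equivalently an offset in $[0,m-1]$ after the shift discussed earlier) of the last neighbour of vertex $i$, and the vertices are listed in increasing order $1,2,\ldots,n$. Hence Lemma~\ref{lem:saving-oh-n-bits} applies verbatim to this sub-array: in $O(n)$ time and using only $O(\lg n)$ extra bits, we convert the offsets part into the format whose last $cn$ bits are identically zero, while — as emphasised at the start of Section~\ref{sec:linear-free-bits} — still supporting $O(1)$-time access to any individual offset, so that degrees (differences of consecutive offsets) and neighbour-list boundaries remain computable in $O(1)$ time. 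Crucially, this transformation does not touch $Z[1]$ or the edge-list portion of $Z$ at all.

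Next I would run, essentially unchanged, the relevant {\sf ROM} routine from Theorem~\ref{big-oh-rom-results}, treating the $cn$ liberated bits at the tail of the offsets region as its $cn$-bit read/write work space, and treating the remainder of $Z$ (the re-encoded offsets together with the untouched edge lists) as the read-only input it is allowed to query; since every input query — ``degree of $v$'', ``$j$-th neighbour of $v$'' — is still answerable in $O(1)$ time, the running time is exactly the bound stated for the corresponding item of Theorem~\ref{big-oh-rom-results}. The routine emits its output (the (un)ordered {\sf BFS}/{\sf DFS} order, the connected components or bipartiteness verdict, the topological order, the {\sf MCS} numbering, the chain decomposition, the biconnectivity/$2$-edge-connectivity verdict with its cut vertices/bridges, the $st$-numbering, or the {\sf MST} edge set) to the output stream, which the in-place model permits. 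When it terminates I would apply the reverse direction of Lemma~\ref{lem:saving-oh-n-bits}, again in $O(n)$ time, to restore the offsets part and hence all of $Z$ to its original contents. Since only a constant number of word-sized variables are ever used outside $Z$, the total extra space is $O(\lg n)$ bits, and the total time is $O(n)$ plus the time of the borrowed {\sf ROM} routine — precisely the quantity claimed in each item; for items~1 and~2 nothing beyond this is needed, as their {\sf ROM} counterparts are already linear-time.

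The one place that requires genuine care rather than bookkeeping is the interface between the two ingredients: we must be sure that (i) the compressed offset encoding delivered by Lemma~\ref{lem:saving-oh-n-bits} leaves the $cn$ zero bits as one contiguous block that a {\sf ROM} routine can treat as flat work space, while simultaneously answering offset queries in $O(1)$ time, and (ii) the constant $c$ produced by the lemma is at least the constant hidden in the ``$cn$ bits of space'' hypothesis of Theorem~\ref{big-oh-rom-results}. Part (ii) is harmless: the constant in Theorem~\ref{big-oh-rom-results} may be taken to be the smaller of the two (or the entropy argument can be applied blockwise to enlarge the saving), so matching constants is not an issue. Part (i) is the real content, and it is exactly what the proof of Lemma~\ref{lem:saving-oh-n-bits} (Appendix~\ref{app:lemma2}) supplies; granting it, the remainder of the argument is the routine reduction sketched above, and I expect no further obstacles.
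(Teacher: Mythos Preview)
Your proposal is correct and follows essentially the same black-box reduction the paper gives: apply Lemma~\ref{lem:saving-oh-n-bits} to the sorted offsets to free $cn$ bits, run the appropriate {\sf ROM} algorithm from Theorem~\ref{big-oh-rom-results} using those bits as work space, then invoke the reverse direction of the lemma to restore $Z$. You are, if anything, more explicit than the paper about the two interface issues (contiguity/$O(1)$ access and matching the constants), both of which the paper handles implicitly via the proof of Lemma~\ref{lem:saving-oh-n-bits}.
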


Observe that Theorem~\ref{big-oh-restore-results} already improves all the results of Chakraborty et al.~\cite{Chakraborty00S18} as they have shown in-place algorithms for a strict subset of problems (only item $1$, $3$ and $9$ above) considered in Theorem~\ref{big-oh-restore-results} using time $O(n \lg^3 n)$. In the next section, however, we further improve the running time of Theorem~\ref{big-oh-restore-results} by providing improved version of Lemma~\ref{lem:saving-oh-n-bits}.

\section{Saving $n \lg n -2n$ Bits, and Its Applications}\label{sec:nlgn-free-bits}

In what follows, we show how one can improve Lemma~\ref{lem:saving-oh-n-bits} so that almost $n \lg n$ bits become free to be used, and using this we will design optimal in-place algorithms for the above mentioned graph problems. Our main result can be described as follows:

\begin{theorem}\label{thm:saving-nlgn-bits}
Given a sorted list of $n$ integers from the universe $[0, m-1]$, it can be represented either simply as an array $A[1 . . . n]$ with
the integers in sorted order or as an array of $n$ integers, such that the last $n \lg n - 2n$ bits of this array are all zero. Moreover, there exists an in-place $O(n)$ time algorithm for switching between both these formats.
\end{theorem}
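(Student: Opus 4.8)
The plan is to re-encode the sorted list with an Elias--Fano-style layout, which is precisely the ``sorted numbers have low entropy'' idea made concrete. Write $b=\ceil{\lg m}$ for the current field width, put $t=\ceil{\lg n}$ and $\ell=b-t$, and split each value as $A[i]=2^{\ell}h_i+r_i$ with $h_i=\lfloor A[i]/2^{\ell}\rfloor\in[0,2^{t})$ and $r_i\in[0,2^{\ell})$. Since $A$ is non-decreasing, so is $h_1\le\cdots\le h_n$, and this monotone sequence is captured by the self-delimiting bit string $B=0^{h_1}1\,0^{h_2-h_1}1\cdots 0^{h_n-h_{n-1}}1$, whose length is $n+h_n$; using $h_n\le\lfloor(m-1)/2^{\ell}\rfloor$ this is at most $2n$ (verifying this, and that the resulting free suffix is exactly $n\lg n-2n$ bits under the worst rounding of the two ceilings, is the routine calculation alluded to in the techniques overview). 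The compact representation is the concatenation ``[$r_1\cdots r_n$ at $\ell$ bits each][$B$][zeros]''; it occupies $n\ell+|B|\le n(b-t)+2n$ bits inside the $nb$-bit array, so its last $nb-n(b-t)-2n=nt-2n\ge n\lg n-2n$ bits are zero. Recovering any $A[i]$ needs only $r_i$ and $h_i=\mathrm{select}_1(B,i)-i$. We may clearly assume $m$ is large enough (say $m\ge n$) for the bound to be non-vacuous, and $m\le n^{2}$ (which always holds for simple graphs, and can be forced in general by an obvious symmetric adjustment of the split when $m>n^{2}$), so that $\ell\le t$.

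The real work is performing the two translations in place, in $O(n)$ time, using only $O(\lg n)$ extra bits. For the forward direction I would (i) first invoke Lemma~\ref{lem:saving-oh-n-bits} to carve out $cn$ genuine scratch bits at the tail of the array, (ii) scan left to right reading the high-part fields and assembling $B$ in that scratch region, (iii) scan left to right again, moving each low-part field $r_i$ down to position $[(i-1)\ell,\,i\ell)$ --- because $\ell\le t$ this target is weakly left of $r_i$'s own source field and of every later element's source fields, so no low part is clobbered before it is read --- and (iv) slide $B$ from the scratch region to immediately after the compacted low parts and zero the remainder. Each phase is one pass over $O(n)$ (non-word-aligned) bit-fields, maintained with a constant number of bit-cursors. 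The inverse translation is symmetric: a single left-to-right pass that walks $B$ and the low-part region in lockstep, reconstructs each $A[i]=2^{\ell}h_i+r_i$, and writes it back into a field it has already consumed, ordered so that the not-yet-decoded suffix of $B$ and the not-yet-read low parts survive.

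I expect the main obstacle to be exactly this in-place bookkeeping in the dense regime, where $m$ is polynomial but $b=\ceil{\lg m\,}=\Theta(\lg n)$: there the $\Theta(n)$-bit workspace needed to hold $B$ is of the same order as both the eventual free suffix and a genuine suffix of the input itself, so it cannot simply be reserved up front --- this is what forces step (i), the bootstrap through Lemma~\ref{lem:saving-oh-n-bits}, and it is where one must be most careful about the read-before-overwrite ordering. A secondary point worth recording, although the theorem statement does not demand it, is that the applications in the next section also want $O(1)$-time access to each $A[i]$ in the compact form; this is obtained by additionally storing an $o(n)$-bit rank/select index for $B$ inside the freed region, so that $h_i=\mathrm{select}_1(B,i)-i$ is computed in $O(1)$ time, at the cost of shrinking the guaranteed all-zero suffix by only a lower-order term.
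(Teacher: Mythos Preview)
Your proposal is correct and lands on the same target encoding as the paper—Elias--Fano on the high $\lg n$ bits, with the low parts stored verbatim—and, like the paper, you bootstrap the needed scratch space through Lemma~\ref{lem:saving-oh-n-bits}. The route you take to make the conversion in-place, however, is genuinely different. The paper splits $A$ into two halves $A_1,A_2$: it first applies Lemma~\ref{lem:saving-oh-n-bits} (with $c=3$) to $A_2$ to obtain $3n/2$ scratch bits there, writes the unary-difference vector for $A_1$ into that scratch, compacts $A_1$'s low parts (freeing $(n/2)\lg n$ bits inside $A_1$), migrates the vector into $A_1$, restores $A_2$, and then repeats with the roles of the halves swapped before a final merge. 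Your scheme instead applies Lemma~\ref{lem:saving-oh-n-bits} once to the \emph{whole} array to get a $\Theta(n)$-bit tail, assembles $B$ there directly, and then compacts the low parts in a single left-to-right pass. Your version is arguably simpler (one pass over $A$ rather than the half-by-half dance), while the paper's two-halves approach keeps each intermediate object small and makes the read-before-overwrite invariant easy to see at every step; both buy the same $n\lg n - O(n)$ free bits, and both support $O(1)$ access by attaching an $o(n)$-bit select index to the $2n$-bit vector.

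Two small remarks on your write-up. First, the condition you invoke in step~(iii), ``$\ell\le t$'', is not the one that actually governs the pass: after Lemma~\ref{lem:saving-oh-n-bits} the low $b-c$ bits of $A[i]$ sit at $[(i-1)(b-c),\,i(b-c))$, so the non-clobbering condition for compacting to $[(i-1)\ell,\,i\ell)$ is $\ell\le b-c$, i.e.\ $t\ge c$, which holds for all $n$ beyond a constant and needs no assumption like $m\le n^2$. Second, with $t=\ceil{\lg n}$ the bound $|B|=n+h_n$ can be as large as $\approx 3n$ rather than $2n$; either take $c\ge 3$ in the bootstrap (as the paper does) or pick $\ell=\ceil{\lg(m/n)}$ so that $h_n<n$. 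Neither point affects the soundness of your plan, but tightening them removes the need for the ``obvious symmetric adjustment'' you allude to.
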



\begin{proof}
One can easily obtain the space bound mentioned in the second representation by applying the Elias-Fano encoding~\cite{Elias74,Fano} on the array $A$.
But to implement this encoding in-place, we apply this encoding in two steps as described next.

We split the array $A$ into two subarrays of size $n/2$ each (assume, for simplicity, that $n$ is even) - call them $A_1$ and $A_2$. 
One can replace the most significant $\lg n$ bits of each of the elements in $A_1$ by a bit vector, say $B$, length $n+n/2$, using the Elias-Fano encoding.
To store $B$ (of length $3n/2$), we first replace the most significant $3$ bits of each of the elements in $A_2$ by storing $8$ positions into 
the array $A_2$ (using Lemma~\ref{lem:saving-oh-n-bits}, with $c = 3$). We store the bit vector $B$ inside the most-significant $3$ bits of every element of $A_2$, and 
compact the remaining (least-significant $\lg m - \lg n$) bits of every element in $A_1$ into a consecutive chunk of $(n/2) \lg (m/n)$ bits in $A_1$, so that the
first $(n/2) \lg n$ bits of $A_1$ is free (i.e., filled with all zeros). We now copy the bit vector $B$ into this free space, and restore the $3$ most significant bits
of all the elements of $A_2$. We now replace the most-significant $\lg n$ bits of each element in $A_2$ by a bit vector $C$ of length $3n/2$, and store it inside free space in $A_1$ (here, we assume that $3n \le (n/2) \lg n$), and compact the remaining (least-significant $\lg m - \lg n$) bits into a consecutive
chunk of $(n/2) \lg (m/n)$ bits in $A_2$. Finally, we copy all the lower order bits (of total length $n \lg (m/n)$ bits) into a single chunk, and also merge the two bit vectors of length $3n/2$ each into a single bit vector of length $2n$. Thus the array $A$ is replaced by a total of $n \lg (m/n) + 2n$ bits, giving a free space of $n \lg n - 2n$ bits. These steps can be essentially performed in reverse order to restore the original representation from the second representation. To support the operation of accessing the $i$-th element of $A$ in $O(1)$ time, we can store an additional $o(n)$-bit auxiliary structure that support the 
{\it rank} and {\it select} operations~\cite{Clark96,Munro96} on the $2n$ bit sequence, which can then be used to access the most-significant $\lg n$ bits of any element in $O(1)$ time. The 
remaining $\lg m - \lg n$ bits can be simply read from the array of values stored in the second representation. See Figure~\ref{figure:graph} in Appendix~\ref{missing-picture-in-appendix-for-freespace} for a visual description of the final outcome of application of this theorem.
\end{proof}

In what follows, we show how one can use Theorem~\ref{thm:saving-nlgn-bits} for solving the graph problems mentioned before. Before giving specific details, we would like to sketch the general pattern for designing optimal in-place algorithms for some of these graph problems. Given the adjacency array presentation (as in $Z$) of the input graph $G$, we now first apply Theorem~\ref{thm:saving-nlgn-bits} on the offsets part of $Z$ to make $n \lg n - 2n$ bits free. Now the classical linear time algorithms~\cite{CLRS,EvenT76,Schmidt13,Tarjan72,Tarjan74,TarjanY84} for these problems typically take $cn\lg n + dn$ bits where both the constants $c$ and $d$ are at most $2$. Hence, our idea is to run these algorithms as it is but in some constant number of phases. More specifically, we store only, say $n/3$ vertices, explicitly at any point of time during the execution of these algorithms, and when these vertices are taken care of by the respective algorithms, we refresh the data structures by initiating it with a new set of $n/3$ vertices and proceed again till we exhaust all the vertices, thus, the entire algorithm would finish in three phases ultimately. Now the exact details of refreshing the data structure with a new set of vertices and start the algorithm again where it left off depends on specific problems. This idea would work for most of the algorithms that we discuss in this paper except a few important ones. More specifically, a few of the algorithms for those graph problems are two (or more) pass algorithms, i.e., in the first pass it computes some function which is what used in the second pass to solve the problem finally, for example, chain decomposition, biconnectivity etc. For these kinds of algorithms, it seems hard to make them work using the previously described constant phase algorithmic idea. Thus, we handle them differently by first proving some related lemmata which might be of independent interest, and then use these lemmata to design in-place algorithms for these graph problems. We discuss these after giving proofs for the algorithms which we can handle in constant phases only. In what follows we provide the proofs of linear time in-place algorithms for {\sf DFS} and its applications, especially chain decompsotion, biconnecitivity, $2$-edge connectivity, and also develop/prove the necessary ideas for these algorithms. The missing parts of Theorem~\ref{linear-in-place-results} are proved in Appendix~\ref{missing-proofs-in-appendix-for-linear-inplace} due to lack of space.

The classical implementation of {\sf DFS} (see for example, Cormen et al.~\cite{CLRS}) uses three colors and a stack to traverse the whole graph. More specifically, every vertex $v$ is white initially while it has not been discovered yet, becomes grey when {\sf DFS} discovers $v$ for the first time and pushes on the stack, and is colored black when it is finished i.e., all its neighbors have been explored completely, and it leaves the stack. The algorithm maintains a color array $C$ of length $O(n)$ bits that stores the color of each vertex at any point in the algorithm, along with a stack (which could grow to $O(n \lg n)$ bits) for storing all the grey vertices at any point during the execution. Our idea is to run essentially the same {\sf DFS} algorithm but we limit the stack size so that it contains at most $n/2$ latest grey vertices all the time. More specifically, whenever the stack grows to have more than $n/2$ vertices, we delete the bottom most vertex from the stack so that above invariant is always maintained along with storing the last such vertex to be deleted in order to enforce the invariant. At some point during the execution of the algorithm, when we arrive at a vertex $v$ such that none of $v$’s neighbors are white, then we color the vertex $v$ as black, and we pop it from the stack. If the stack is still non-empty, then the parent of $v$ (in the {\sf DFS} tree) would be at the top of the stack, and we continue the {\sf DFS} from this vertex. On the other hand, if the stack becomes empty after removing $v$, we need to reconstruct it to the state such that it holds the last $n/2$ grey vertices after all the pops done so far. We refer to this phase of the algorithm as reconstruction step. For this, using ideas from~\cite{AsanoIKKOOSTU14}, we basically repeat the same algorithm but with one twist which also enables us now to skip some of the vertices during this reconstruction phase. In detail, we again start with an empty stack, insert the root $s$ first and scan its adjacency list from the first entry to skip all the black vertices and insert into the stack the leftmost grey vertex. Then the repeat the same for this newly inserted vertex into the stack until we reconstruct the last $n/2$ grey vertices. As we have stored the last vertex to be deleted for maintaining the invariant true, we know when to stop this reconstruction procedure. 
It is not hard to see that this procedure correctly reconstructs the latest set of grey vertices in the stack. We continue this process until all the vertices become black. Moreover, this algorithm runs in $O(m+n)$ time as it involves two phases each taking linear time in the worst case, and uses at most $(n \lg n)/2 + n \lg 3$ bits which fits in our budget of free space in the offsets part of the adjacency array. This completes the description of the linear time in-place {\sf DFS} algorithm. 

Before providing the algorithms for other problems, we need a few additional ideas which we will describe next. In the following theorem, we are interested in dynamically maintaining the degree sequence of all vertices that belong to a spanning subgraph of the original graph. More specifically, given a graph $G=(V,E)$, we want to run some algorithm on $G$ for constructing a sparse spanning subgraph $G'=(V, E')$ (which is a spanning subgraph of $G$ i.e., $E' \subseteq E$ and $|E'|=O(V)$) of $G$, and we are interested in dynamically maintaining the degree of all the vertices $v$ in $G'$ i.e., degree of a vertex $v$ in $G'$ is defined as the number of neighbors $u$ such that the edge $(v,u)$ belongs to $G'$. Thus, degree of a vertex $v$ in $G'$ may not be same as degree of $v$ in $G$. Also note that, by the notion of dynamic, we mean that the algorithm starts with an empty graph and gradually add edges to it before finally culminating with a sparse spanning subgraph, thus during the execution of this algorithm degrees of the individual vertices are changing, and it is this dynamically changing degrees that we want to efficiently maintain. We refer to this as the {\it dynamic maintenance of degree sequence} phase. Towards this goal, we prove the following general theorem. 

\begin{theorem}\label{degee-sequence-theorem}
Given a graph $G$ with $n$ vertices and $m$ edges, let $G'$ be a spanning subgraph of $G$ with $m'$ edges, and also let $d' = m'/n$ be the average degree of $G'$. Then, we can construct the dynamically created degree sequence for the vertices of $G'$ in $O(m+n)$ time using $O(n (\lg d' + \lg \lg n))$ bits of construction space. Moreover, the final degree sequence can be stored using $O(n \lg d')$ bits such that degree of any vertex can be returned in $O(1)$ time.
\end{theorem}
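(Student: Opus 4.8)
The plan is to maintain, for each vertex, a small variable-width counter holding its current degree in $G'$, together with a directory structure that lets us locate and update any such counter in $O(1)$ time. First I would observe that since $G'$ is a spanning subgraph with $m' = d'n$ edges, the sum of the final degrees is $2m' = 2d'n$, so \emph{on average} each vertex needs only $\lg d' + O(1)$ bits; but individual degrees can be as large as $\Theta(n)$, so we cannot simply allocate a fixed-width field per vertex. The standard fix is a two-level scheme: partition the $n$ vertices into $\Theta(n/\lg n)$ blocks of $\Theta(\lg n)$ consecutive vertices, and for each block keep a contiguous region whose width is dictated by the current maximum degree inside that block. A block directory of $O(\lg n)$ bits per block (hence $O(n)$ bits total, and actually $O(n \lg \lg n / \lg n) \cdot \lg\lg n$ is too crude — we charge $O(\lg\lg n)$ bits per vertex for the within-block field-width bookkeeping, matching the stated $O(n(\lg d' + \lg\lg n))$ construction bound) records the starting offset and the current field width of each block, so that the address of vertex $v$'s counter is computed in $O(1)$ from $v$'s block index and its rank within the block.

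Next I would handle the dynamic updates. When an edge $(u,v)$ is added to $G'$ by the outer algorithm, we must increment the counters of $u$ and $v$. If the incremented value still fits in the current field width of the relevant block, this is a trivial in-place write. The only subtlety is a \emph{field overflow}: a counter in some block reaches $2^{w}$ where $w$ is that block's width. Then we widen that block's fields to $w+1$ bits, which requires shifting the suffix of the degree array (all blocks after the current one) by $(\text{block size})$ bits to make room — a potentially expensive operation. The key accounting observation that keeps the total cost at $O(m+n)$ is that a block of $b$ vertices can overflow from width $w$ to $w+1$ only after at least $2^{w}$ increments have landed in it (some counter had to climb from below $2^{w-1}$ up to $2^{w}$... more carefully, from the previous widening threshold), so the number of widenings of a given block is $O(\lg(\text{max degree in block}))$, and each widening costs $O(n/\lg n)$ block-shifts of $O(\lg n)$ bits, i.e.\ $O(n/\lg n)$ words; summing a telescoping/geometric series over all blocks and all widenings gives $O(n)$ total for the reshuffling, while the $m'$ successful increments cost $O(1)$ each, plus the $O(m)$ cost of the outer traversal that decides which edges enter $G'$. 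To get the clean final bound I would actually do the widening lazily in powers of two (double the field width rather than add one), so that each block is reshuffled $O(\lg d'_{\text{block}})$ times and the amortization is immediate.

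Finally, for the post-construction guarantees: once $G'$ is fixed, all degrees are known and $\sum_v \deg_{G'}(v) = 2d'n$. I would then do one compaction pass that rewrites the counters into a static two-level representation — per block store the block's field width ($\lceil\lg(\text{block max})\rceil$) and prefix-sum of block region sizes — which by convexity of $x \mapsto \lg x$ and the degree-sum bound occupies $O(n \lg d')$ bits, and supports $O(1)$-time access to $\deg_{G'}(v)$ by computing the block, reading the width, and extracting the field. Crucially, all of this lives inside the $n\lg n - 2n$ free bits that Theorem~\ref{thm:saving-nlgn-bits} carved out of the offsets part of $Z$, since $O(n(\lg d' + \lg\lg n)) = O(n \lg n)$ and the constants can be tuned (as in the constant-phase strategy) to fit the budget. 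The main obstacle I anticipate is precisely the overflow/reshuffling analysis: making the amortized argument airtight so that the worst case over \emph{all} interleavings of edge insertions still yields $O(m+n)$, and simultaneously keeping the directory overhead down to $O(\lg\lg n)$ bits per vertex rather than a naive $O(\lg n)$ — this is where the block size of $\Theta(\lg n)$ and the power-of-two widening schedule have to be chosen in concert.
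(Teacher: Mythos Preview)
Your high-level decomposition into $\Theta(n/\lg n)$ blocks of $\Theta(\lg n)$ vertices matches the paper's, and the idea of variable-width counters with slack is the right one. The gap is in the memory management and its time analysis. You propose to keep all blocks contiguous and, on an overflow, to \emph{shift the entire suffix} of the degree array to make room; you then claim that ``summing a telescoping/geometric series over all blocks and all widenings gives $O(n)$ total.'' This step does not go through. Even with power-of-two widening, a single suffix shift touches $\Omega(n/\lg n)$ words, and this cost cannot be charged to increments in the block that overflowed, because those increments are too few and are unrelated to the sizes of the \emph{other} blocks being moved. Concretely, take $m'=\Theta(n)$ and route $\lg n$ increments to one vertex in each block, processing the blocks from last to first: every block widens $\Theta(\lg\lg\lg n)$ times, and each widening of block $i$ moves a suffix of $\Theta((n/\lg n-i)\lg\lg n)$ words, giving $\Theta\!\big((n/\lg n)^{2}\,\lg\lg n\,\lg\lg\lg n\big)$ total word operations, which is $\omega(n)$. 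There is no telescoping here; the shift cost of one block's widening is paid in data belonging to other blocks.

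The paper avoids this entirely by never shifting: the blocks are stored as a collection of \emph{extendible arrays} in the sense of Raman and Rao, so each block can grow independently in $O(1)$ amortized time without disturbing any other block. The within-block layout is also different from your uniform-width-per-block scheme: each vertex $v$ gets its own field of $\lceil\lg d_v\rceil+\lg\lg n$ bits, and a parallel delimiter bitvector with a \emph{select} structure locates the $i$-th field in $O(1)$ time. The amortization is then purely local: a block is rebuilt (in $O(\lg n)$ time) only when some counter exhausts its $\lg\lg n$ slack bits, which requires at least $\lg n$ increments to that vertex since the previous rebuild, so the rebuild cost amortizes to $O(1)$ per increment. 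Your proposal is missing both the extendible-array memory management and this slack-based local amortization; as written, the suffix-shifting scheme does not achieve the stated $O(m+n)$ bound.
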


\begin{proof}
We divide the vertices into $n/\lg n$ groups of $\lg n$ vertices each. For each group, we allocate a block of $\lg n (\lg d' + \lg \lg n)$ ($\le \lg^2 n$) bits initially (uniformly for all the vertices in the block), to store their degrees. 
We also maintain another parallel bit vector for each block that simply stores the delimiters for each vertex's degree (i.e., a 1 bit to indicate the last bit corresponding to each vertex's degree, and 0 everywhere else). To access the degree of the $i$-th vertex in a block, we first find the positions of the $i-1$-th and the $i$-th 1 bits in the corresponding delimiter sequence, and read the bits between these two positions in the block. 
To perform this efficiently during the construction, we maintain an auxiliary structure that supports {\it select} operation in $O(1)$ time~\cite{Clark96,Munro96}. 
At any point, the representation of each block and delimiter sequence consists of an integral number of words, and these representations are maintained as a collection of ``extendible arrays'' using the structure of~\cite[Lemma~1]{RamanR03}.

At any time, a vertex has some number of bits allocated to store its degree. If the degree of the vertex can be updated in-place, then we first access the position where the degree of the vertex is stored, using the select data structure stored for the corresponding delimiter sequence, and update the degree of the node stored within the block.
Otherwise, we first note that at least $\lg n$ increments have been performed to some vertex within the block (since each vertex has a `slack' of $\lg \lg n$ bits at the beginning of the latest re-construction of the block). Now, we spend $O(\lg n)$ time to re-construct the block (and also the corresponding delimiter sequence with its select structure) so that the degree of each vertex $v$ in the block is stored $\lceil{\lg d_v}\rceil + \lg\lg n$ bits, where $d_v$ is the current degree of $v$. This $\lg n$ construction time can be amortized over the $\lg n$ increments performed on the block before its re-construction, incurring an $O(1)$ amortized cost per increment.
Once we construct the degree sequence for the entire subgraph $G'$, we can scan all the blocks, and compact the degree sequence so that it occupies $O(n \lg d')$ bits. The space usage during the construction is bounded by $O(n (\lg d' + \lg \lg n))$ bits of space. Note that, the above task can be performed while executing the linear time {\sf DFS} algorithm described before, and this completes the proof.
\end{proof}

\begin{corollary}\label{degee-sequence-corollary}
When $G'$ is the DFS tree of $G$, then we can store the dynamically created degree sequence of $G'$, whose size is bounded by $2n$ bits, by running a $O(m+n)$ time DFS procedure while using $O(n \lg \lg n)$ bits of space during construction such that the degree of any vertex in $G'$ can be accessed in $O(1)$ time.
\end{corollary}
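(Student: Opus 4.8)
The plan is to obtain this as a specialization of Theorem~\ref{degee-sequence-theorem} with $G'$ taken to be the DFS tree $T$ of $G$, run alongside the linear time in-place DFS procedure described above, and then to replace the generic $O(n\lg d')$-bit final bound by a concrete $2n$-bit encoding. Here $T$ has $m' = n-1$ edges, so the quantity $d' = m'/n$ is less than $1$; in particular $\lg d' = O(1)$, and the block-allocation scheme of Theorem~\ref{degee-sequence-theorem} simply starts with $O(\lg\lg n)$ bits per vertex. Hence the construction space is $O(n(\lg d' + \lg\lg n)) = O(n\lg\lg n)$ bits, and since the degree-maintenance work is absorbed into the traversal (as argued in Theorem~\ref{degee-sequence-theorem}), the total time stays $O(m+n)$.

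The point I expect to need the most care is that the in-place DFS of the previous section performs \emph{reconstruction steps} in which parts of the stack are rebuilt and adjacency lists are rescanned, so a tree edge may be traversed more than once. The degree updates must therefore be triggered only when a genuinely new tree edge appears, i.e. exactly when a white vertex is colored grey for the first time (equivalently, pushed onto the stack as a freshly discovered vertex), and never during a reconstruction. Concretely, each first discovery of a vertex $v$ with parent $u$ causes a single increment, and I would increment only the \emph{child count} $c_u$ of the parent $u$. This yields $n-1$ increments in total, each costing $O(1)$ amortized by Theorem~\ref{degee-sequence-theorem}, and the degree of $v$ in $T$ is then $c_v$ plus $1$ when $v$ is not the DFS root (which we remember in $O(\lg n)$ extra bits).

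For the $2n$-bit compact form with $O(1)$-time access, note that $\sum_v c_v = n-1$, so the child-count sequence can be written in unary as a bit vector of length $\sum_v (c_v + 1) = 2n-1 < 2n$ bits; equipping it with an $o(n)$-bit \emph{select} structure~\cite{Clark96,Munro96} lets us recover $c_i$, and hence the tree-degree of vertex $i$, in $O(1)$ time. Since the entire process is carried out during a single linear-time DFS, the running time is $O(m+n)$, the construction space is $O(n\lg\lg n)$ bits, and the final structure occupies at most $2n$ bits, which establishes the corollary.
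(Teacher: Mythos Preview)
Your proposal is correct and follows the paper's intended approach: the corollary is stated without its own proof and is meant to be an immediate specialization of Theorem~\ref{degee-sequence-theorem} with $G'$ the DFS tree (so $d'=O(1)$), run alongside the linear-time in-place DFS. You supply strictly more detail than the paper does---in particular the explicit unary encoding of the child-count sequence that yields the concrete $2n$-bit bound, and the observation that increments must fire only on first discovery (not during stack reconstruction)---and these refinements are consistent with the paper's subsequent use of the corollary, where ``degree'' in the DFS tree is explicitly taken to mean the number of children (so you may simply return $c_v$ rather than $c_v+1$).
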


For the following discussion, assume that we are working with connected undirected graphs only, and given this, now we are going to describe the {\it setting up parent} phase. More specifically, while performing {\sf DFS}, suppose we visit the vertex $u$ for the first time from the vertex $v$ (hence $v$ becomes the parent of $u$ in the {\sf DFS} tree), at that point we perform one or more swaps in the portion of the adjacency array $Z$ where the neighbors of $u$ are located so that the vertex $v$ becomes the first neighbor of $u$ now. If the initial configuration of $Z$ already satisfies this property in $u$'s neighborhood, we don't need to do anything else. We repeat this procedure for every vertex $v \in V$ so that when {\sf DFS} ends, the first neighbor of every vertex $v$ (except the root vertex) is its parent in the {\sf DFS} tree. Note that we can perform this step of setting up parent in the first location of every neighborhood list of every vertex alongside performing the linear time {\sf DFS} algorithm of Theorem~\ref{linear-in-place-results}. Thus, we obtain the following,

\begin{lemma}\label{setting-up-parent}
There exists a linear time in-place algorithm for performing the setting up parent procedure for every vertex of $G$.
\end{lemma}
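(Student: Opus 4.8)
The plan is to execute the ``setting up parent'' swaps as a side-effect of the linear-time in-place {\sf DFS} routine already constructed above, so that no asymptotic overhead is incurred. Recall that in that {\sf DFS} algorithm, whenever we color a white vertex $u$ grey we have arrived at $u$ from some vertex $v$ currently at (or near) the top of the stack; this $v$ is, by definition, the parent of $u$ in the {\sf DFS} tree. At exactly that moment I would perform the parent-installation step for $u$: scan the adjacency sublist of $u$ inside $Z$ until the entry equal to $v$ is located, and swap it with the first entry of $u$'s sublist. The offsets part of $Z$ tells us precisely where $u$'s neighbor block begins and ends (the location pointed to by $Z[u+1]$ is the last neighbor of $u$, and $Z[u]$ gives the boundary with the previous vertex), so both the scan and the swap are straightforward array operations using $O(\lg n)$ extra bits. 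Since every vertex except the root is discovered exactly once, each sublist is scanned at most once for this purpose, contributing a total of $\sum_{u} \deg(u) = O(m)$ work across the whole traversal; the root vertex is simply skipped.

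The key steps, in order, are: \textbf{(1)} run the in-place {\sf DFS} of Theorem~\ref{linear-in-place-results}, interleaving the following with it; \textbf{(2)} each time a white vertex $u$ is first visited from $v$, read the boundaries of $u$'s neighbor block from the offsets part of $Z$; \textbf{(3)} linearly scan that block for the entry storing $v$ and swap it into the first position of the block (a no-op if it is already there); \textbf{(4)} continue the {\sf DFS}; \textbf{(5)} upon termination, the first neighbor of every non-root vertex equals its {\sf DFS}-tree parent, and $Z$ still contains exactly its original multiset of entries (we only permuted within each sublist). The correctness argument is essentially a one-line invariant: ``for every vertex already colored non-white other than the root, the first entry of its neighbor block is its {\sf DFS}-tree parent,'' which is established at discovery time and never disturbed afterwards, since subsequent swaps only touch the sublists of other vertices.

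One subtlety worth flagging is the interaction with the stack-truncation and reconstruction mechanism of the linear-time {\sf DFS}: when the bottom of the stack is discarded and later the stack is rebuilt by re-scanning adjacency lists, we must be careful not to re-run the parent-installation (and thereby possibly corrupt an already-correct first entry, or waste time). This is handled cleanly by observing that parent installation is driven by the \emph{coloring} event (white $\to$ grey), not by stack membership: during the reconstruction phase every vertex we re-encounter is already grey or black, so the trigger condition never fires, and no spurious swaps occur. Equivalently, one may install the parent pointer at the moment $u$'s color is flipped, which happens exactly once over the whole execution regardless of how many times $u$ is pushed onto or popped from the stack.

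The main obstacle, such as it is, is purely bookkeeping rather than conceptual: we must make sure that the swaps performed inside $Z$ do not interfere with the {\sf DFS} algorithm's own reliance on the structure of $Z$ (in particular, the ``skip the black vertices, insert the leftmost grey vertex'' logic of the reconstruction step reads the adjacency lists in order). Since we only ever move the parent entry to the front of a sublist and the parent of an already-discovered vertex is itself already discovered (hence grey or black), this reordering can only move a non-white vertex to the front, which does not change which vertex is the ``leftmost white'' or ``leftmost grey'' one in any consistent way that affects the reconstruction — or, to be fully safe, one performs all parent installations but verifies (as the cited idea from~\cite{AsanoIKKOOSTU14} already does) that reconstruction depends only on the colors of vertices and not on their position within a sublist. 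Either way the total time remains $O(m+n)$ and the extra space remains $O(\lg n)$ bits, giving Lemma~\ref{setting-up-parent}.
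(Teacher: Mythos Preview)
Your proposal is correct and follows essentially the same approach as the paper: the paper's own argument is simply the observation that the parent-swap can be carried out at the moment each vertex $u$ is first discovered from $v$, ``alongside performing the linear time {\sf DFS} algorithm of Theorem~\ref{linear-in-place-results}.'' You have supplied considerably more detail than the paper does (in particular, your discussion of why the swap is tied to the coloring event rather than to stack membership, and hence is not re-triggered during stack reconstruction), but the underlying idea is identical.
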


Note that, by choosing appropriate parameters, we can actually perform the {\it dynamic maintenance of degree sequence} and the {\it setting up parent} phase together while running the linear time in-place {\sf DFS} algorithm of Theorem~\ref{linear-in-place-results} in any graph $G$. More specifically, suppose we choose to run the linear time in-place {\sf DFS} algorithm of Theorem~\ref{linear-in-place-results} coupled with the setting up parent procedure (to implement Lemma~\ref{setting-up-parent}) by storing $n/2$ vertices (thus taking $n \lg n/2$ bits) in the free space of the offsets part of $Z$, thus, leaving roughly $(n \lg n/2-2n)$ bits of space still free, which can be used to construct and store the degree sequence of all the vertices in the {\sf DFS} tree (to implement corollary~\ref{degee-sequence-corollary}) while running the same linear time in-place {\sf DFS} algorithm of Theorem~\ref{linear-in-place-results}. By degree of a vertex $v$ in the {\sf DFS} tree $T$, we mean the number of children $v$ has in  $T$, and it is this number that gets stored using the algorithm of Corollary~\ref{degee-sequence-corollary}. Hence, at the end of this linear time in-place procedure, we have the following invariant: (a) the first neighbor of every vertex (except the root) is its parent in the {\sf DFS} tree, and (b) the offsets part of $Z$ contains the degree sequence of every vertex $v$ in the {\sf DFS} tree, and this occupies at most $2n$ bits. 

Armed with the above algorithm, we are going to explain next the {\it implicitly representing the search tree} phase. The goal of this phase is to rearrange the neighbors of any vertex $v$ in such a way such that the first neighbor of $v$ becomes its parent in the {\sf DFS} tree (except for the root vertex), this is followed by all of $v$'s children in the {\sf DFS} tree (if any) one by one, finally all the non-child neighbors. Thanks to the setting up parent phase,
we can implement the implicitly representing the search tree phase in linear time overall by doing a reverse search. More specifically, for every non-root vertex $v$, we start by scanning $v$'s list from the second neighbor onward (as first neighbor is its parent), and for each one of them, say $u$, we go to the first location of $u$'s neighbor list to check if $v$ is $u$'s parent if so, we move $u$ in $v$'s list closer to $v$'s parent (i.e., towards the beginning of $v$'s list) by swapping, and repeat this procedure for all the neighbors of $v$'s so that at the end all the children of $v$ are clustered together followed by $v$'s parent. The root vertex can be handled similarly, but we need to start the scanning procedure from the first neighbor itself as it doesn't have any parent. Hence, we spend time proportional to its degree at every vertex, and obtain the following.
\begin{lemma}\label{represent-implicit-tree}
There exists a linear time in-place algorithm for implicitly representing the search tree of $G$.
\end{lemma}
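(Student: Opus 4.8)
The plan is to invert, after the fact, the parent relation set up by Lemma~\ref{setting-up-parent}; since the algorithm is essentially the one described in the paragraph preceding the lemma, the work is to formalise it and pin down the two subtleties (the cost of the reverse lookups, and the root). The key observation is that once the first neighbor of every non-root vertex is its {\sf DFS} parent, ``$u$ is a child of $v$ in the {\sf DFS} tree'' is \emph{equivalent} to ``the first neighbor of $u$ is $v$'', and the latter is an $O(1)$-time test: through the compressed offsets of Theorem~\ref{thm:saving-nlgn-bits} we retain constant-time access to the boundaries $[\ell_w,r_w]$ of each vertex $w$'s sublist in $Z$, so reading $Z[\ell_u]$ costs $O(1)$. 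Thus, instead of recording children while the tree is being built --- which would need $\Theta(n\lg n)$ extra bits we cannot afford --- we recover them afterwards with one lookup per directed incidence.

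Concretely, I would process the vertices one at a time, in any order. For a non-root vertex $v$ I initialise a write pointer $p\leftarrow \ell_v+1$ (position $\ell_v$ holds the parent and is left untouched) and scan $q=\ell_v+1,\ldots,r_v$; setting $u=Z[q]$, if $u$ is the register-stored root $s$ I skip it, and otherwise if $Z[\ell_u]=v$ I swap $Z[p]\leftrightarrow Z[q]$ and set $p\leftarrow p+1$. At the end, $v$'s sublist reads ``parent, then all children in their original relative order, then all non-child neighbors'', which is exactly the required layout; the stored {\sf DFS}-tree degree $\deg_T(v)$ of Corollary~\ref{degee-sequence-corollary} marks where the children block ends (and lets the scan terminate as soon as all children are placed, though that is not needed for the time bound). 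The root $s$ is handled identically except that the scan and $p$ both start at $\ell_s$, and $u$ is a child of $s$ iff $Z[\ell_u]=s$.

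For correctness: processing $v$ only permutes $v$'s own sublist, and for non-root $v$ the cell $\ell_v$ is never written; since we never run the child test on $s$, every lookup $Z[\ell_w]$ executed during the phase still returns the {\sf DFS} parent of $w$ fixed by Lemma~\ref{setting-up-parent}, so the test ``$Z[\ell_u]=v$'' is always valid and the processing order is immaterial. Each swap preserves the multiset of entries of $Z$, and no cell outside the sublist currently being processed is ever modified, so when the phase ends $Z$ holds the original adjacency multiset with each sublist merely reordered, and all earlier structural invariants survive. The work at $v$ is $O(\deg_G(v))$ for the scan plus $O(1)$ per scanned entry for the lookup, hence $O(m+n)$ in total, with only $p$, $q$ and the id of $s$ as extra workspace, i.e.\ $O(\lg n)$ bits.

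The step I expect to be the main obstacle is not the swapping logic but guaranteeing that the reverse lookups really run in $O(1)$ time in place: one must check that after the combined {\sf DFS}/setting-up-parent/degree-sequence pass the $n\lg n$-bit offsets region still supports constant-time retrieval of each sublist's starting position --- i.e.\ that the compressed form of Theorem~\ref{thm:saving-nlgn-bits} together with its $o(n)$-bit rank/select structure coexists there with the $\le 2n$-bit {\sf DFS}-tree degree sequence without either overwriting the other. The only genuine special case is the root, which has no parent and whose first entry is therefore meaningless for the child test; keeping its id in a register and excluding it both as a processed vertex (where the scan must start at position~$1$) and as a candidate child of any other vertex disposes of it.
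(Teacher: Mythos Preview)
Your approach is essentially identical to the paper's: the paper also performs a reverse search, scanning each vertex $v$'s list (from the second position for non-root $v$, from the first for the root), looking up the first entry of each neighbor $u$'s list to test whether $v$ is $u$'s parent, and swapping $u$ toward the front of $v$'s list if so, for a total cost proportional to the degree of each vertex. Your write-up is in fact more careful than the paper's sketch on two points the paper leaves implicit: the need to exclude the root $s$ as a candidate child (so that a spurious equality $Z[\ell_s]=v$ never misfires), and the verification that constant-time access to $\ell_u$ survives the compressed encoding of Theorem~\ref{thm:saving-nlgn-bits} alongside the degree sequence of Corollary~\ref{degee-sequence-corollary}.
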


Thus, from now on we can assume that the neighbor list of every vertex is represented in the search tree format implicitly. We choose to call it so as, note that, given in this format, it is very convenient to answer the following queries for any given vertex $v$ in the {\sf DFS} tree $T$: (a) return the parent of $v$ in $T$ in $O(1)$ time, (b) return the number of children $v$ has in $T$ in $O(1)$ time (from the dynamically maintained degree sequence), and finally, (c) enumerate all the children of $v$ one by one optimally in time proportional to its number of children. Not only this, observe that we can still perform the {\sf DFS} traversal of $G$ optimally in linear-time using essentially the same algorithm of Theorem~\ref{linear-in-place-results} given this representation. We can even slightly optimize this {\sf DFS} algorithm by stop scanning the neighbor list of any vertex $v$ as soon as we encounter its last child $u$ in the {\sf DFS} tree (can be derived from the dynamically maintained degree sequence) as neighbors after $u$ will not be of significance in performing the {\sf DFS} traversal of $G$. Hence, we obtain the following.

\begin{lemma}\label{traversal-using-implicit-tree}
There exists a linear time in-place algorithm for performing the {\sf DFS} traversal of a given graph $G$ using the implicit search tree representation of $G$.
\end{lemma}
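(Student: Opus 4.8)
The plan is to re-run the recursion-free, bounded-stack {\sf DFS} algorithm developed earlier in this section (the one underlying Theorem~\ref{linear-in-place-results}), but to exploit the layout guaranteed by Lemma~\ref{represent-implicit-tree} together with the dynamically maintained child-count sequence of Corollary~\ref{degee-sequence-corollary} so that (i) the traversal only ever touches an $O(1)$-length prefix plus the child block of each adjacency list, and (ii) the ``reconstruction'' steps become cheap. Concretely, for every vertex $v$ we can obtain in $O(1)$ time its parent (the first neighbor), the number $k_v$ of its children in the {\sf DFS} tree, and the fact that these children occupy neighbor positions $2,\dots,k_v+1$ of $v$'s list; everything past position $k_v+1$ is irrelevant for a {\sf DFS} traversal, so we simply never scan there, giving a per-vertex cost of $O(k_v+1)$ and hence $O(n)$ for all the tree edges (on top of the $O(m)$ already paid to build the representation).

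First I would allocate, inside the $n\lg n - 2n$ free bits carved out of the offsets part of $Z$ by Theorem~\ref{thm:saving-nlgn-bits}, a $2$-bit colour array ($2n$ bits), room for the child-count sequence ($\le 2n$ bits, by Corollary~\ref{degee-sequence-corollary}, plus its $o(n)$-bit rank/select structure), and a stack bounded to hold the most recent $n/2$ grey vertices ($\le (n\lg n)/2$ bits); for $n$ above a suitable constant these fit comfortably, and smaller graphs are handled by brute force within $O(\lg n)$ bits. I would then run ordinary {\sf DFS}: push the root and report it; repeatedly, at the stack-top vertex $v$, find the first child of $v$ that is still white, recurse into it (push it, colour it grey, report it), and when $v$ has no white child colour $v$ black and pop it; whenever the stack exceeds $n/2$ vertices drop its bottom element, remembering the last dropped vertex so that a later reconstruction knows when to stop.

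The reconstruction step (triggered when the stack becomes empty while black vertices remain) is where the implicit tree pays off. At any moment the active {\sf DFS} path runs from the root to the deepest grey vertex, and at each vertex on this path exactly one child is grey (the one currently being recursed into), with all earlier children black and all later children white; so I would simply walk down from the root, at each vertex scanning its child block left to right, skipping black children until the grey one is found, thus rebuilding the bottom $n/2$ grey vertices of the path. Since each reconstruction walks a single root-to-leaf path and scans at most $\sum_v k_v \le n$ child entries, and since between two reconstructions at least $n/2$ vertices must have been popped (so there are only $O(1)$ reconstructions in total), the whole reconstruction cost is $O(n)$. Combining, the traversal runs in $O(m+n)$ time using $O(\lg n)$ extra bits, which proves the lemma.

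The step I expect to be the main obstacle, and the one I would spell out most carefully, is bookkeeping the ``next child to explore'' at each active vertex without exceeding $O(\lg n)$ extra bits and without a quadratic blow-up: naively, each time the traversal returns to a vertex $v$ from a finished child it would rescan $v$'s child block from the start, costing $\Theta(k_v^{2})$ summed over $v$ in the worst case (e.g.\ a star). I would avoid this either by storing with each stack entry the index (or the position in $Z$) of the next child to try, which only enlarges the stack to about $n/4$ augmented entries and still fits in the available free space, or by reading off in $O(1)$ amortized time, from the colour array and the contiguous child block, the position just after the last black child of $v$. In either case I must check that this bookkeeping remains consistent with the drop-bottom/reconstruct mechanism; in particular, that re-inserting a vertex during reconstruction also restores its ``next child'' pointer to the correct value, and that the colour array alone suffices to recover it.
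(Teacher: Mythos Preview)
Your proposal is correct and matches the paper's approach, which is essentially the one-line observation that the bounded-stack {\sf DFS} underlying Theorem~\ref{linear-in-place-results} can be rerun on the implicit tree, stopping each adjacency scan at the last child (known from the stored degree sequence). Your extra care about the ``next child to explore'' bookkeeping is well placed---the paper does not spell this out---and your fix of storing augmented stack entries (shrinking the stack to roughly $n/4$ entries, each carrying a child index) is sound and fits within the $n\lg n - 2n$ free bits alongside the colour array and degree sequence.
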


With the lemma above, we almost complete the description of all the ideas required to design linear time in-place algorithms for the last two items of Theorem~\ref{linear-in-place-results}. In what follows, we describe the necessary graph theoretic background for explaining our algorithms.

Schmidt~\cite{Schmidt2010c} introduced a decomposition of the input undirected graph that partitions the edge set of the graph into cycles and paths, called chains, and used this to design an algorithm to find cut vertices and biconnected components \cite{Schmidt13} and also to test 3-connectivity~\cite{Schmidt2010c} among others. In what follows, we discuss briefly the decomposition algorithm, and state his main result. The algorithm first performs a {\sf DFS} on $G$. Let $r$ be the root of the {\sf DFS} tree $T$ of $G$. {\sf DFS} assigns an index to every vertex $v$, namely, the time vertex $v$ is discovered for the first time during {\sf DFS} -- call it the depth-first-index of $v$ ({\sf DFI(v)}). Imagine that the back edges are directed away from $r$ and the tree edges are directed towards $r$. The algorithm decomposes the graph into a set of paths and cycles called chains as follows. See Figure~\ref{figure:graph-chain} in Appendix~\ref{missing-picture-of-chain-decomposition} for an example. First we mark all the vertices as unvisited. Then we visit every vertex starting at $r$ in the increasing order of {\sf DFI} (i.e., in {\sf DFS} order), and do the following. For every back edge $e$ that originates at $v$, we traverse a directed cycle or a path. This begins with $v$ and the back edge $e$ and proceeds along the tree towards the root and stops at the first visited vertex or the root. During this step, we mark every encountered vertex as visited. This forms the first chain. Then we proceed with the next back edge at $v$, if any, or move towards the next vertex in the increasing {\sf DFI} order and continue the process. Let $D$ be the collection of all such cycles and paths. Notice that the cardinality of this set is exactly the same as the number of back edges in the {\sf DFS} tree as each back edge contributes to a cycle or a path. Also, as initially every vertex is unvisited, the first chain would be a cycle as it would end in the starting vertex. Using this, Schmidt proved the following.

\begin{theorem}[\cite{Schmidt13}]\label{2ec}
Let $D$ be a chain decomposition of a connected undirected graph $G=(V,E)$. Then $G$ is 2-edge-connected if and only if the chains in $D$ partition $E$. Also, $G$
is biconnected if and only if $\delta(G) \geq 2$ (where $\delta(G)$ denotes the minimum degree of $G$) and $D_1$ is the only cycle in the set $D$ where $D_1$ is the first chain in the decomposition. An edge $e$ in $G$ is a bridge if and only if $e$ is not contained in any chain in $D$. A vertex $v$ in $G$ is a cut vertex if and only if $v$ is the first vertex of a cycle in $ D \setminus D_1$.
\end{theorem}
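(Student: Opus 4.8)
\emph{Proof proposal.} The plan is to reduce the whole theorem to two combinatorial facts about the chains, each proved by tracking the order in which chains are traced and which vertices are already marked at each step. First I would record the standard structure of an undirected DFS tree: there are no cross edges, so every non-tree edge is a \emph{back edge} joining a vertex to a proper ancestor of it. Thus each chain $C_i$ consists of one back edge $\{v,w\}$ (with $v$ a proper ancestor of $w$, directed down the tree) followed by the tree path climbing from $w$ toward the root, truncated at the first already-marked vertex; the chains are pairwise edge-disjoint and every back edge is the first edge of exactly one chain. I write $v$ for the \emph{first vertex} of $C_i$ (its lower, ancestor, end), and I use repeatedly that (i) chains originating at vertices of smaller $\mathrm{DFI}$ are traced first, and (ii) when $C_i$ is traced, every vertex strictly between its two ends is unmarked at that instant (this is just the truncation rule), whereas its two ends may already be marked.

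The first fact is: the tree edge $e_c=\{c,\mathrm{parent}(c)\}$ lies in some chain if and only if some back edge joins a vertex of the subtree $T_c$ to a proper ancestor of $c$. The ``only if'' direction is immediate, since a chain containing $e_c$ must have entered $T_c$ along its initial back edge and then climbed past $\mathrm{parent}(c)$. For ``if'', take such a back edge whose upper endpoint $a$ is as shallow as possible and let $C^{*}$ be the first chain originating at $a$ whose back edge enters $T_c$; using (i) and the minimality of $a$ one checks that no chain traced before $C^{*}$ can touch $T_c$ at all, so when $C^{*}$ climbs out of $T_c$ it passes through the still-unmarked vertex $c$ and crosses $e_c$. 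Now observe that a back edge is never a bridge (it lies on a cycle together with its own tree path) and always lies in a chain, whereas the tree edge $e_c$ is a bridge precisely when no non-tree edge leaves $T_c$, which by the first fact is precisely when $e_c$ lies in no chain. This gives the bridge statement verbatim, and, since a connected graph is $2$-edge-connected exactly when it is bridgeless, it gives that the chains partition $E$ iff $G$ is $2$-edge-connected.

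The second, harder fact concerns cut vertices, and I would prove it assuming $G$ is $2$-edge-connected: for $i\ge2$, the chain $C_i$ is a cycle with first vertex $v$ if and only if $v$ is a cut vertex of $G$. If $C_i$ ($i\ge2$) is a cycle, let $c$ be the child of $v$ at which it closes; then $c$ lies strictly between the ends of $C_i$, so by (ii) it is unmarked when $C_i$ is traced, and by (i) any back edge from $T_c$ to a proper ancestor of $v$ would have seeded an earlier chain that climbs through $c$ and marks it first, a contradiction---so no such back edge exists, which by the standard DFS cut-vertex test (treating the root separately: a cyclic $C_i$ at the root $r$ forces $r$ to have two or more children, else $C_1$ would already have marked $r$'s unique child and truncated $C_i$ there) makes $v$ a cut vertex. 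Conversely, if $v$ is a cut vertex, take a witness child $c$ (for $v\ne r$, one with no back edge from $T_c$ above $v$; for $v=r$, a child whose subtree $C_1$ does not enter); $2$-edge-connectivity forces a back edge from $T_c$ up to $v$, and the first chain originating at $v$ whose back edge enters $T_c$ then climbs through the unmarked interior of $T_c$ all the way back to $v$ and closes, yielding a cycle in $D\setminus D_1$ with first vertex $v$. This proves the cut-vertex statement. For biconnectivity: if $\delta(G)\ge2$ and $D_1$ is the only cycle of $D$, then $G$ is in fact $2$-edge-connected --- a bridge would split $G$ into $T_c$ and its complement, each of minimum degree at least $2$ within itself and hence each containing a back edge whose chain closes into a cycle lying inside that part, so $D$ would have two disjoint cycle-chains and thus a cycle besides $D_1$; so $G$ is a connected graph on at least three vertices with no cut vertex, i.e.\ biconnected. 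Conversely a biconnected graph has $\delta(G)\ge2$, is $2$-edge-connected, and has no cut vertex, so by the cut-vertex equivalence $D_1$ is its only cycle.

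I expect the cut-vertex equivalence to be the main obstacle. Making it precise requires careful bookkeeping of which vertices are marked at the moment a given chain is traced, and the argument works only because (i) and (ii) are used in tandem: ``smaller $\mathrm{DFI}$ first'' pins down which chains count as ``earlier'', and ``unmarked strictly between the ends'' lets one conclude that a cyclic chain's closing child cannot have been visited before. One subtlety worth flagging is that the clean cut-vertex characterization really does need $G$ to be $2$-edge-connected first --- an endpoint of a bridge can be a cut vertex without being the first vertex of any cycle --- which is why in the algorithmic application the $2$-edge-connectivity test is run before the cut-vertex test.
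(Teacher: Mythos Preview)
The paper does not prove Theorem~\ref{2ec}: it is quoted verbatim from Schmidt~\cite{Schmidt13} and used as a black box for the subsequent in-place chain-decomposition algorithm. There is therefore no ``paper's own proof'' to compare your proposal against.

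That said, your reconstruction is essentially Schmidt's original argument, and the two structural lemmas you isolate (tree edge $e_c$ lies in a chain iff some back edge leaves $T_c$ above $c$; a chain $C_i$ with $i\ge 2$ is a cycle iff its first vertex is a cut vertex) are exactly the right reductions. The ordering bookkeeping via your (i) and (ii) is the correct mechanism. One imprecision: in the biconnectivity direction, ``each of minimum degree at least $2$ within itself'' is not literally true, since the two bridge endpoints each lose one incident edge; what you actually need (and what does hold) is that each side of the bridge cut contains a DFS-tree leaf, that leaf has degree $\ge 2$ in $G$, and its non-tree edge cannot cross the cut---so each side contains a back edge, and the \emph{first} chain entering $T_c$ is then a cycle disjoint from $D_1$.

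Your final caveat is well taken and worth recording: the cut-vertex clause as the paper states it is only correct once $G$ is already $2$-edge-connected. A bridge endpoint can be a cut vertex without being the first vertex of any cycle in $D\setminus D_1$ (e.g.\ glue a pendant edge to one vertex of a triangle). This is not a flaw in your proof but in the paper's paraphrase of Schmidt; the paper's algorithm is unaffected because it tests bridges before cut vertices, exactly as you note.
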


In what follows, we use the linear time in-place {\sf DFS} algorithm of Lemma~\ref{traversal-using-implicit-tree} to perform the tests in Theorem~\ref{2ec}. More specifically, using the linear time {\sf DFS} algorithm of Lemma~\ref{traversal-using-implicit-tree} along with the help of the implicit search tree representation, we can visit every vertex, starting at the root $r$ of the {\sf DFS} tree, in increasing order of {\sf DFI}, and enumerate (or traverse through) all the non-tree (back) edges of the graph as required in Schimdt's algorithm as follows: for each node $v$ in {\sf DFI} order, we can retrieve the number of children, say $t$, $v$ has in the {\sf DFS} tree from the stored degree sequence in $O(1)$ time, then if we directly access the $(t+2)$-th neighbor, say $u$, of $v$, we are guaranteed that the edge $(v,u)$ must be a back edge, thanks to the implicit search tree representation. Moreover all the neighbors starting from the $(t+2)$-th location till the end of $v$'s neighbor list constitute back edges that emanate from $v$. Now we maintain a bit array, {\it visited}, of size $n$ (in the free space of the offsets part), corresponding to the $n$ vertices, initialized to all zeros meaning all the vertices are unvisited at the beginning. We use the {\it visited} array to mark vertices visited during the chain decomposition. When a new back edge is visited for the first time, the algorithm traverses the path starting with the back edge followed by a sequence of tree edges (towards the root) until it encounters a marked vertex, and also marks all the vertices on this path. Note that we can achieve this by repeatedly finding the parent using the implicit search tree representation, i.e., suppose $(v,u)$ is a back edge (which is discovered from $v$'s neighbor list), then we traverse to $u$'s neighbor list to find its parent, say $w$, followed by finding $w$'s parent and so on till we encounter a marked vertex. By checking whether the vertices are marked or not, we can also distinguish whether an edge is encountered for the first time or has already been processed. Note that this procedure constructs the chains on the fly. To check whether an edge is a bridge or not, we first note that only the tree edges can be bridges 
(back edges always form a cycle along with some tree edges). Also, from Theorem~\ref{2ec},
it follows that any (tree) edge that is not covered in the chain decomposition algorithm is a bridge.
Thus, to report these, we maintain a bitvector $M$ of length $n$, corresponding to the $n$ vertices, 
initialized to all zeros. Whenever a tree edge $(u,v)$ is traversed during the chain decomposition
algorithm, if $v$ is the child of $u$, then we mark the child node $v$ in the bit vector $M$. After 
reporting all the chains, we scan the bitvector $M$ to find all unmarked vertices $v$ and output
the edges $(u,v)$, where $u$ is the parent of $v$, as bridges. If there are no bridges found in this process, then $G$ is $2$-edge connected. To check whether a vertex is a cut vertex (using the characterization in 
Theorem~\ref{2ec}), we keep track the starting vertex of the current chain (except for the first chain, which is a cycle), 
that is being traversed, and report that vertex as a cut vertex if the current chain is a cycle. If there are no cut vertices found in this process then $G$ is biconnected. Otherwise, we keep one more array of size
$n$ bits to mark which vertices are cut vertices. This completes the description of linear time in-place algorithms for performing chain decomposition, checking biconnectivity and/or $2$-edge connecitivity, and finding cut vertices and bridges using Schmidt's algorithm.

%



\section{Conclusion}
In this paper, we designed linear time in-place algorithms for a variety of graph problems. As a consequence, many interesting and contrasting observations follow. For example, for {\it directed st-reachability}, the most space efficient polynomial time algorithm~\cite{BarnesBRS98} in {\sf ROM} uses $n/2^{\Theta(\sqrt{\lg n})}$ bits. In sharp contrast, we obtain optimal linear time using logarithmic extra space algorithms for this problem as a simple corollary of both {\sf BFS} and {\sf DFS}. Thus, in terms of workspace this is exponentially better than the best known polynomial time algorithm~\cite{BarnesBRS98} in {\sf ROM}. This provided us with one of the main motivations for designing algorithms in the {\it in-place} model. A somewhat incomparable result obtained by Buhrman et al.~\cite{BuhrmanCKLS14,Koucky16} where they gave an algorithm for {\it directed st-reachability} on catalytic Turing machines in space $O(\lg n)$ with catalytic space $O(n^2 \lg n)$ and time $O(n^9)$. Finally, we conclude by mentioning that we barely scratched the surface of designing in-place graph algorithms with plenty of more to be studied in this model in future. For example, can we design linear time in-place algorithms for testing planarity of a graph? Can we compute the max-flow/min-cut in-place? Can we compute shortest paths between any two vertices of a given graph in-place? We leave these problems as our future directions of study. 

\bibliography{dfs}
\newpage
\appendix
\section{Appendix}\label{appendix}
We provide in this section all the missing proofs and diagrams. The corresponding theorems are repeated for the reader's convenience.


\subsection{Configuration of Adjacency Array Before and After Freeing Space}\label{missing-picture-in-appendix-for-freespace}
\begin{figure}[h]
\begin{center}
	\includegraphics[scale=0.165]{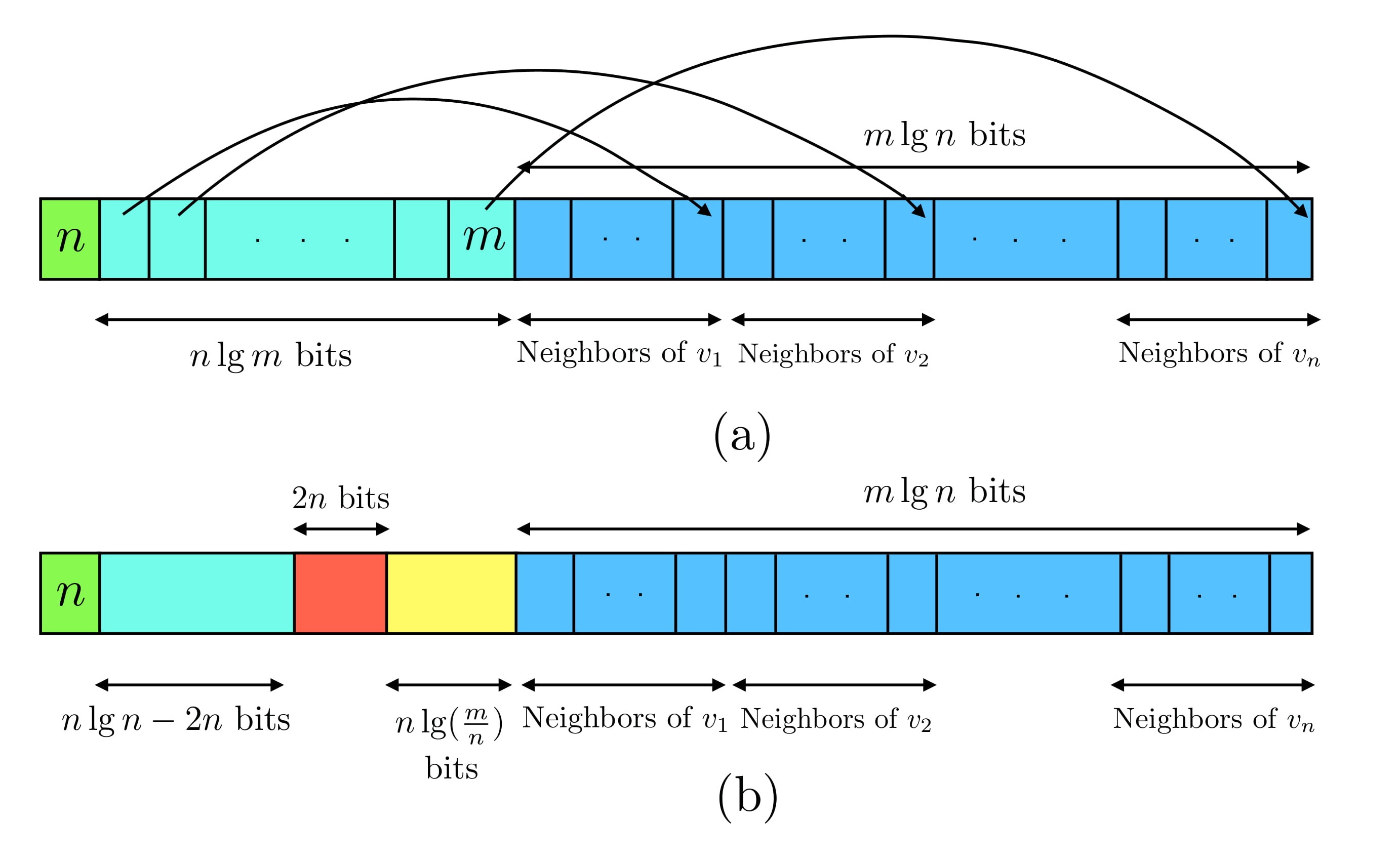}
\end{center}
\caption{(a) General adjacency array structure $Z$ of a given input directed graph. (b) Configuration of $Z$ after freeing $n \lg n-2n$ bits in the offsets part of $Z$.}
\label{figure:graph}
\end{figure}

\subsection{Proof of Lemma~\ref{lem:saving-oh-n-bits}}\label{app:lemma2}
{\bf Lemma~\ref{lem:saving-oh-n-bits} (Restated)}
{\it 
Given a sorted list of $n$ integers from the universe $[0, m-1]$, it can be represented either simply as an array $A[1 . . . n]$ with
the integers in sorted order or as an array of $n$ integers, such that for some fixed constant $c > 1$, the last $cn$ bits of this array are all zero. Moreover, there exists an in-place $O(n)$ time algorithm for switching between both these formats.
}

\begin{proof}
We first note that the most significant $c$ bits 
of all the numbers in any subarray of $A$ can be retrived by storing
$2^c-1$ positions into the subarray, since they form a non-decreasing sequence (of length $n$) over the range $[0, 2^c-1]$. 
Thus if we store these $2^c$ positions in the working space, then we can compact the remaining $\lg m -c$ bits of all the elements of the array
so that they occupy the first $n (\lg m -c)$ bits of the array $A$, freeing the last $cn$ bits. One can decrease the working space usage by performing this in two stages: in the first stage, we can create $n$ bits of empty space by storing one position in the working space (i.e., with $c = 1$); then in the next stage, we can store the remaining $2^c-2$ positions in those linear bits, and create further $(c-1)n$ bits of empty space.
To access the $i$-th element of $A$ in the second representation, we can find the most significant $c$ bits of $A[i]$ by counting the number of positions that are less than $i$ from the $2^c-1$ positions stored. 
The least significant $\lg m -c$ bits can be simply read from the array stored in the second representation. To restore the first representation from the second, one can obtain the most significant $c$ bits of each element from the $2^c$ positions stored, and read the remaining bits from the array.
\end{proof}

\subsection{An Illustration of Chain Decomposition Algorithm}\label{missing-picture-of-chain-decomposition}
\begin{figure}[h]
\begin{center}
	\includegraphics[scale=.6, keepaspectratio=true]{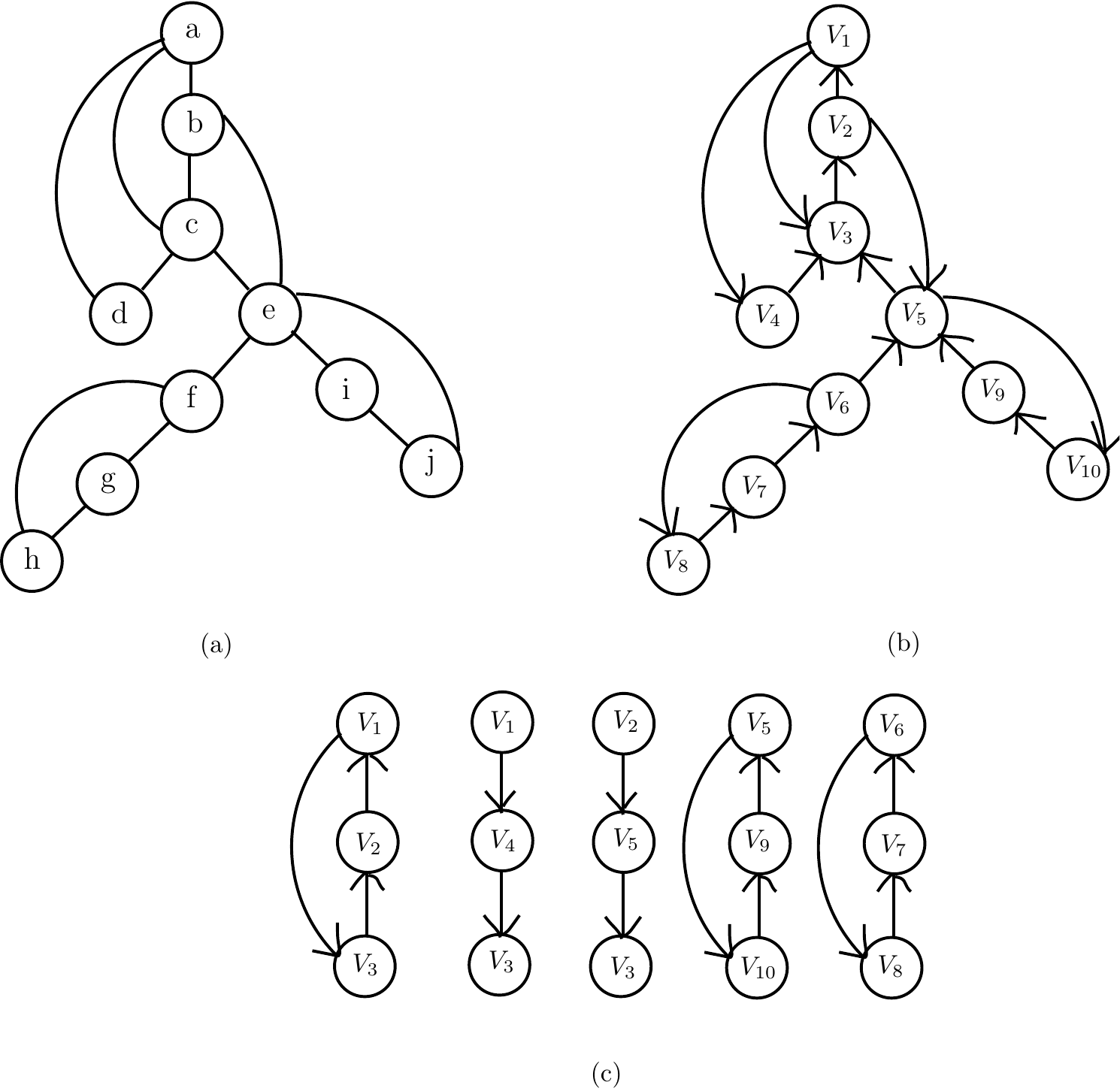}
\end{center}
\caption{Illustration of Chain Decomposition. (a) An input graph $G$. (b) A {\sf DFS} traversal of $G$ and the resulting edge-orientation along with {\sf DFIs}. (c) A chain decomposition $D$ of $G$. The chains $D_2$ and $D_3$ are paths and rest of them are cycles. The edge $(V_5,V_6)$ is bridge as it is not contained in any chain. $V_5$ and $V_6$ are cut vertices.}
\label{figure:graph-chain}
\end{figure}

\subsection{Missing Proofs of Theorem~\ref{linear-in-place-results}} \label{missing-proofs-in-appendix-for-linear-inplace}

{\bf Theorem~\ref{linear-in-place-results} (Restated partly)}
{\it Using linear time in the in-place model, one can 
\begin{enumerate}
\item traverse the vertices of any graph in ordered {\sf BFS} manner ,
\item recognize bipartite graphs and compute connected components in undirected graphs,
\item report the vertices of a directed acyclic graph ({\sf DAG}) in topologically sorted order,
\item obtain a maximum cardinality search ordering of any graph,
\item output an {\it st}-numbering of given biconnected graph, given two vertices $s$ and $t$, 
\end{enumerate}
Also, given an undirected edge-weighted (where weights are bounded by some polynomial in $n$) graph $G$, we can find a minimum spanning tree ({\sf MST}) of $G$ in $O(m \lg n)$ time in-place. }

\begin{proof}
We start with the ordered {\sf BFS} problem for which the classical linear time algorithm starts by coloring all vertices white i.e., unvisited. Then it grows the search starting at the root vertex, say $s$, making it grey (i.e., visited but not completely explored) and adding it to a queue. Finally the algorithm repeatedly removes the first element of the queue, and adds all its white neighbors at the end of the queue (coloring them grey), coloring the element black (i.e., completely explored) after removing it from the queue. As the queue can
store up to $O(n)$ elements, the space for the queue can be $O(n \lg n)$ bits. Towards the efficient in-place implementation of this algorithm, we also maintain (in a slightly different format) all the colors in an array, say $C$, along with maintaining a queue. First observe that elements in the queue are only from two consecutive levels of the {\sf BFS} tree, and the traditional algorithm colors all these vertices grey initially. We refine this further and color them separately as $grey_1$ and $grey_2$. As a consequence, now the algorithm repeatedly removes the first element of the queue (say having color $grey_1$), and adds all its white neighbors at the end of the queue (coloring them $grey_2$), coloring the element black after removing it from the queue, and vice versa. Note that, if the combined size of the two consecutive levels of the {\sf BFS} tree (i.e., $grey_1$ and $grey_2$ vertices) remains bounded by $n/2$ throughout the entire {\sf BFS} algorithm, then we get immediately linear time in-place {\sf BFS} algorithm as we can store these vertices (along with the color array) in the free space of the offsets part of the adjacency array. If not, then one of the levels, say $grey_1$ level without loss of generality and denote it by level $L$, has more than $n/2$ vertices, then all the other levels combined must have less than $n/2$ vertices. In particular, the level $(L-1)$ and $(L+1)$ has less than $n/2$ vertices. As we cannot store the level $L$ explicitly in the free space at once, our idea is to generate the level $L$ from the level $(L-1)$ (which we can store explicitly) {\it on the fly}. More specifically, assuming we have the level $(L-1)$ stored explicitly, we repeatedly expand (till we exhaust) each vertices from this level and generate the vertices of level $L$ but instead of storing them in the queue, we just change the color of these vertices in the color array. Now in order to generate the vertices of level $(L+1)$, we need to expand the vertices of level $L$ in the correct order but we did not store them in the array, thus, we again repeat the expansion of vertices from level $(L-1)$ one by one to generate the vertices of level $L$ in order and at that time only, we process them to generate their neighbors which belong to level $(L+1)$ along with changing their colors in $C$ array, and storing them in a queue (as we can afford to store these) for future exploration. Note that this phenomenon happens at most once during the entire execution of the {\sf BFS} algorithm. Now, it is easy to see that this procedure correctly traverses the input graph $G$ in the ordered {\sf BFS} manner using $O(m+n)$ time in-place. Using this linear time in-place {\sf BFS} algorithm, it is straightforward to obtain the recognition algorithm for bipartite graphs, and compute the connected components in undirected graphs as all of these are simple applications of {\sf BFS}. Hence we also obtain linear time in-place algorithms for these problems.

One of the standard algorithms for computing topological sort~\cite{CLRS} works by simply reporting the vertices of a {\sf DFS} traversal of a given directed acyclic graph in reverse order. We can easily implement this in-place in linear time by running our {\sf DFS} algorithm in two phases. More specifically, in the first phase, we run the {\sf DFS} algorithm completely to generate/store the last $n/2$ vertices in the {\sf DFS} traversal order, and then report them in reverse order. This is followed by running the {\sf DFS} algorithm one more time but stopping just when we obtain the other $n/2$ vertices, then we reverse the order of this vertices and report. This completes the description of generating the vertices in topologically sorted order of an input directed acyclic graph in-place in linear time.

We start by briefly recalling the {\sf MCS} algorithm and its implementation as provided in~\cite{Chakraborty2018,TarjanY84}.
The output of the {\sf MCS} algorithm is a numbering of the vertices from $1$ to $n$. Now assume that every vertex is unnumbered at the beginning of the algorithm. Then, during the execution, at each iteration of the algorithm, an unnumbered vertex that is adjacent to the most number of numbered vertices is chosen (breaking the ties arbitrarily), and is numbered with the next available label. To implement this strategy, the {\sf MCS} algorithm (as described by~\cite{Chakraborty2018} by slightly improving the classical implementation of~\cite{TarjanY84}) maintains an array of sets $set[i]$ for $0 \leq i \leq n-1$ where $set[i]$ stores all unnumbered vertices adjacent to exactly $i$ numbered vertices. So, at the beginning all the vertices belong to $set[0]$. The algorithm also maintains the largest index $j$ such that $set[j]$ is non-empty. To implement an iteration of the {\sf MCS} algorithm, we remove a vertex $v$ from $set[j]$ and number it. For each unnumbered vertex $w$ adjacent to $v$, $w$ is moved from the set containing it, say $set[i]$, to $set[i+1]$. If there is a new entry in $set[j+1]$, we set the largest index to $j+1$, and repeat the same. Otherwise when $set[j]$ becomes empty, we repeatedly decrement $j$ till a non-empty set is found and in this set we repeat the same procedure. In order to delete easily, we implement each set as a doubly-linked list. In addition, for every vertex $v$, we also need to store a pair $(i,j)$ if the vertex $v$ belongs to the list $set[i]$ and $j$ is the pointer to $v$'s location inside $set[i]$ to get linear time overall. This completes the description of the implementation level details of {\sf MCS}. To implement this algorithm in-place using linear time, we essentially run the same algorithm but in four phases. More specifically, we use the above mentioned data structures for storing the top most $n/4$ vertices (in terms of the number of numbered neighbors) at any point during execution along with storing their $(i,j)$ pairs. We also store in a bitvector which vertices have been numbered by the algorithm so far. Then, after we exhaust these vertices, we spend a linear time by scanning the complete adjacency array to determine the next set of top most $n/4$ vertices to initiate these data structures, after which the algorithm starts working where it left off. It can be seen that we can afford to store these information in our free space to essentially ensure linear overall time for the execution of {\sf MCS}. 

For the {\it st}-numbering problem, Chakraborty et al.~\cite{CRS17} modified the classical algorithm of Tarjan~\cite{Tarjan86} and showed that using $O(n)$ bits of space, {\it st}-numbering of any given biconnected graph $G$ and two distinct vertices $s$ and $t$, can be performed in $O(m \lg^2 n \lg \lg n)$ time. This algorithm essentially runs in $O(\lg n)$ phases and in each phase the algorithm performs a {\sf DFS} and some other related work in time $O(m \lg n \lg \lg n)$ while storing $O(n/\lg n)$ vertices in the working space (as they had only $O(n)$ bits available). In order to obtain linear time in-place algorithm for the {\it st}-numbering problem, we can simply run the same algorithm in some constant number, say $k$, of phases along with storing $O(n/k)$ vertices explicitly and in each of those phases, we can run our linear time in-place {\sf DFS} algorithm of Theorem~\ref{linear-in-place-results} to perform the tasks needed in their algorithm~\cite{CRS17}, hence giving us the desired result.

Finally, for the minimum spanning tree ({\sf MST}) problem, we can essentially run the classical Prim's algorithm~\cite{CLRS} using binomial heap in some constant number of phases (like the previous algorithms) to achieve $O(m \lg n)$ running time. We omit the simple details here.  
\end{proof}

\end{document}